\documentclass[10pt,twocolumn,letterpaper]{article}

\usepackage{mpi}
\usepackage[utf8]{inputenc}
\usepackage{graphicx}
\usepackage{epsfig}
\usepackage{color}
\usepackage{times}
\usepackage{amsmath,mathtools}
\usepackage{amsthm}	
\usepackage{amssymb}
\usepackage[htt]{hyphenat}
\usepackage{enumerate}
\usepackage{subfigure}
\usepackage[algo2e,inoutnumbered,linesnumbered,algoruled,vlined]{algorithm2e}
\usepackage{algpseudocode}
\usepackage{booktabs}
\usepackage{tabularx}
\usepackage{array}
\usepackage{bm}
\usepackage{url}
\usepackage{bbm}
\usepackage{enumitem}
\usepackage[dvipsnames]{xcolor}
\usepackage{stackengine}
\usepackage{wrapfig}
\usepackage{multirow}
\usepackage{multicol}
\usepackage{tablefootnote}
\usepackage{comment}

\stackMath

\usepackage[pagebackref=true,breaklinks=true,letterpaper=true,colorlinks,bookmarks=false]{hyperref}

\usepackage{tabularx}
\newcommand{\PBS}[1]{\let\temp=\\#1\let\\=\temp}
\newcommand{\RBS}{\let\\=\tabularnewline}

\DeclareMathSymbol{@}{\mathord}{letters}{"3B}

\newcommand\blfootnote[1]{%
  \begingroup
  \renewcommand\thefootnote{}\footnote{#1}%
  \addtocounter{footnote}{-1}%
  \endgroup
}

\mpifinalcopy 
 
\newcommand{\x}{\mathbf{x}}

\newcommand{\Ag}{\mathbf{A}}

\newcommand{\bg}{\mathbf{b}}

\newcommand{\Id}{\mathbf{I}}

\newcommand{\Pm}{\mathbf{P}}

\newcommand{\one}{\mathbf{1}}

\newcommand{\X}{\mathbf{X}}

\newcommand{\q}{\mathbf{q}}
\newcommand{\Q}{\mathbf{Q}}

\newcommand{\vecm}{\mathrm{vec}}

\newcommand{\s}{\mathbf{s}}

\newcommand{\R}{\mathbb{R}}

\newcommand{\Pset}{\mathcal{P}}

\newcommand{\Graph}{\mathcal{G}}
\newcommand{\Vertex}{\mathcal{V}}
\newcommand{\Edge}{\mathcal{E}}
\newcommand{\Dom}{\mathcal{D}}
\newcommand{\Cycles}{\mathcal{C}}
\newcommand{\Paths}{\mathcal{P}}

\def\bfx{{\x}} 
\def\bfQ{{\Q}} 
\def\bfB{{{ \mathcal{B}}} }
\def\bfs{{\s}} 

\newtheorem{thm}{Theorem}
\newtheorem{remark}{Remark}

\newtheorem{prop}{Proposition}
\newtheorem{dfn}{Definition}

\DeclareMathOperator*{\argmin}{arg\min}
\DeclareMathOperator*{\argmax}{arg\max}

\usepackage{cleveref}
\usepackage{braket}

\newcommand{\parahead}[1]{\vspace{2mm}\noindent\textbf{#1}.\ }

\Crefname{assumption}{\textbf{H}\hspace{-3pt}}{\textbf{H}\hspace{-3pt}}
\crefname{assumption}{\textbf{H}}{\textbf{H}}

\Crefname{assumption}{\textbf{H}\hspace{-3pt}}{\textbf{H}\hspace{-3pt}}
\crefname{algorithm}{\text{Alg.}}{\text{Alg.}}
\crefname{assumption}{\textbf{H}}{\textbf{H}}
\crefname{equation}{\text{Eq}}{\text{Eq}}
\crefname{definition}{\text{Dfn.}}{\text{Dfn.}}
\crefname{lemma}{\text{Lemma}}{\text{Lemma}}
\crefname{dfn}{\text{Dfn.}}{\text{Dfn.}}
\crefname{thm}{\text{Thm.}}{\text{Thm.}}
\crefname{tab}{\text{Tab.}}{\text{Tab.}}
\crefname{fig}{\text{Fig.}}{\text{Fig.}}
\crefname{table}{\text{Tab.}}{\text{Tab.}}
\crefname{figure}{\text{Fig.}}{\text{Fig.}}
\crefname{section}{\text{Sec.}}{\text{Sec.}}
\crefname{prop}{\text{Prop.}}{\text{Prop.}}

\renewcommand{\etal}{\textit{et al}. }
\renewcommand{\ie}{\textit{i}.\textit{e}. }

\newcommand{\insertimageC}[5]{ 
\begin{figure}[#5]
\centering
\includegraphics[width=#1\linewidth, clip=true]{figures/#2}
\caption{#3}
\label{#4}
\end{figure}
}

\newcommand{\insertimageStar}[5]{ 
\begin{figure*}[#5]
\centering
\includegraphics[width=#1\linewidth, clip=true]{figures/#2}
\caption{#3}
\label{#4}
\end{figure*}
}

\begin{document}

\title{Quantum Permutation Synchronization} 

\author{Tolga Birdal$^{1, \boldsymbol{\star}}$ \hspace{1.8em} Vladislav Golyanik$^{2, \boldsymbol{\star}}$ \hspace{1.8em} Christian Theobalt$^2$ \hspace{1.8em} Leonidas Guibas$^1$ \vspace{10pt}\\
$^1$Stanford University 
\textbf{\hspace{3em}} 
$^2$Max Planck Institute for Informatics, SIC 
} 

\maketitle

\blfootnote{$\boldsymbol{^\star}$ both authors contributed equally to this work} 

\begin{abstract} 
We present \textbf{QuantumSync}, the first quantum algorithm for solving a synchronization problem in the context of computer vision. In particular, we focus on permutation synchronization which involves solving a non-convex optimization problem in discrete variables. We start by formulating synchronization into a quadratic unconstrained binary optimization problem (QUBO). While such formulation respects the binary nature of the problem, ensuring that the result is a set of permutations requires extra care. Hence, we: (i) show how to insert permutation constraints into a QUBO problem and (ii) solve the constrained QUBO problem on the current generation of the adiabatic quantum computers D-Wave. Thanks to the quantum annealing, we guarantee global optimality with high probability while sampling the energy landscape to yield confidence estimates. Our proof-of-concepts realization on the adiabatic D-Wave computer demonstrates that quantum machines offer a promising way to solve the prevalent yet difficult synchronization problems. 
\end{abstract} 

\vspace{-20pt} 
\vspace{-2mm}\section{Introduction}\label{sec:intro}

Computer vision literature accommodates a myriad of efficient and effective methods for processing rich information from widely available 2D and 3D cameras. 
Many algorithms at our disposal excel at processing single frames or sequence of images such as videos~\cite{rempe2020caspr,deng2020deep,birdal2019generic}. 
Oftentimes, a scene can be observed by multiple cameras, from different, usually unknown viewpoints. 
To this date it remains an open question how to \emph{consistently} combine the multi-view cues acquired without respecting any particular order \cite{Golyanik_MBGA2020}. 

\begin{figure}[t]
    \centering
    \includegraphics[width=\linewidth]{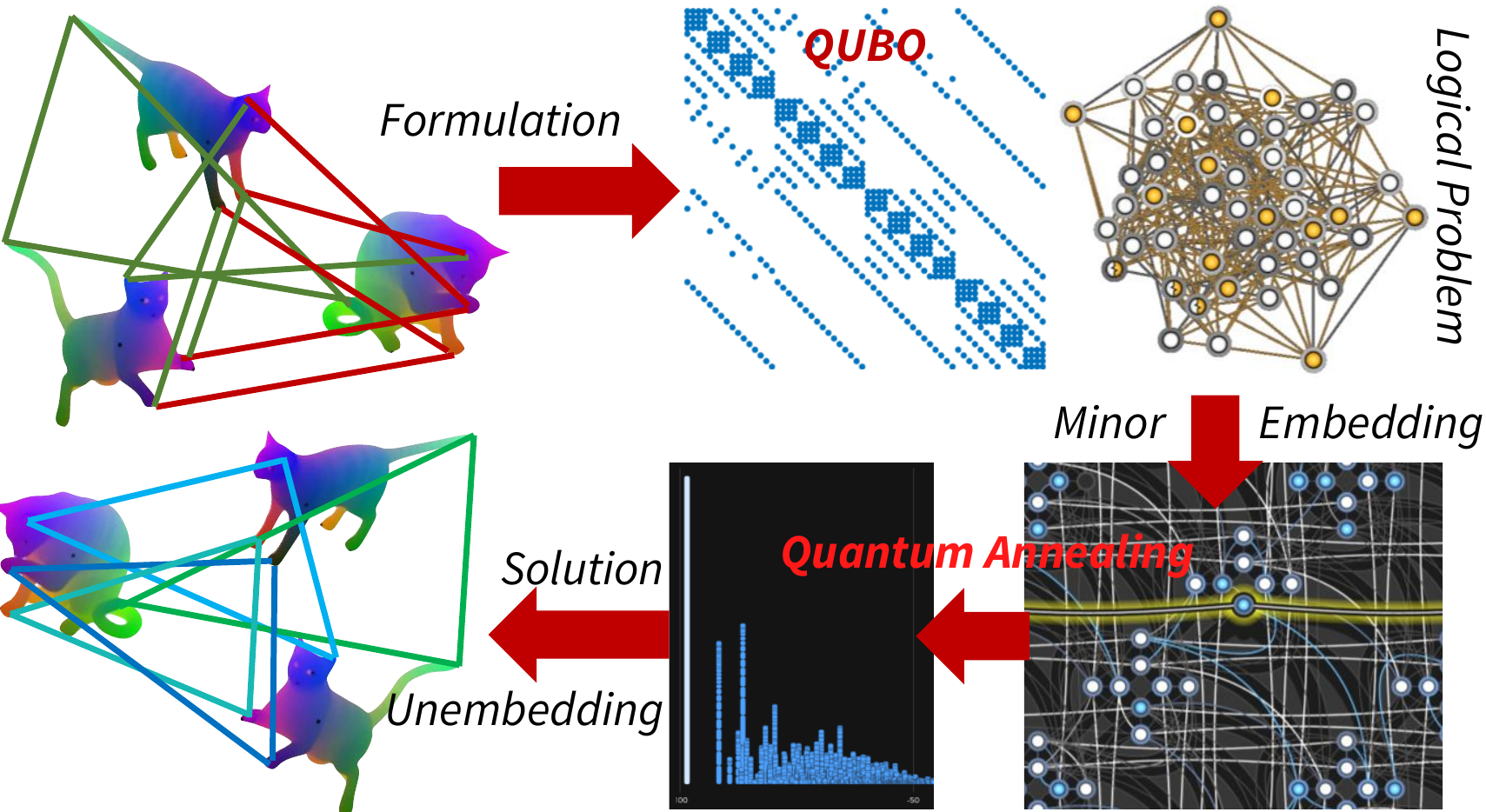} 
    \caption{{Overview of \textbf{QuantumSync}.} 
    QuantumSync formulates permutation synchronization as a QUBO and embeds its logical instance on a quantum computer. After running multiple anneals, it selects the lowest energy solution as the global optimum. 
    \vspace{-5mm}} 
    \label{fig:teaser} 
\end{figure} 

\emph{Synchronization}~\cite{simons1990overview,giridhar2006distributed,singer2011angular} is one of the proposed solutions to the aforementioned problem. 
On an abstract note, it involves distributing the discrepancies over the graph connecting multiple viewpoints such that the estimates are consistent across all considered nodes. 
To this end, synchronization simultaneously \emph{averages} the pairwise local information into a global  one~\cite{govindu2014averaging,hartley2013rotation,Arrigoni2019,birdal2020synchronizing}. 
This procedure is a fundamental piece of most state-of-the-art multi-view reconstruction and multi-shape analysis  pipelines~\cite{salas2013slam++,cadena2016past,carlone2015initialization} because it heavy-lifts the global constraint satisfaction while respecting the geometry of the parameters. 
In fact, most of the multiview-consistent inference problems can be expressed as some form of a synchronization~\cite{zhang2019path,birdal2017cad}. 

In this paper, our focus is \emph{permutation synchronization}, where the edges of the graph are labeled by permutation matrices denoting the correspondences either between two 2D images or two 3D shapes. 
Specifically, we seek to find an absolute ordering for each point of each frame which sorts all the corresponding points into the same bin. Unfortunately, this problem by definition involves a combinatorial non-convex optimization, for which attaining the global minimum is intractable under standard formulations targeting classical von Neumann computers.
This difficulty have encouraged scholars to seek \emph{continuous relaxations} for which either a good local optimum~\cite{birdalSimsekli2018,birdal2019probabilistic} or a closed form solution~\cite{arrigoni2016spectral,huang2013consistent,arrigoni2017synchronization} could be found. 
The solution to this approximately equivalent relaxed problem should then be rounded back to the permutation matrices to report a valid result. 
Ideally, we would like to avoid such approximations and work with the original set of constraints at hand. This is exactly what we propose to do in this work. 
Because on a classical computer we cannot speak of a global optimality guarantee of our discrete problem, we turn our attention to a new family of processors, \textit{i.e.,} \emph{quantum computers}.

A quantum computer is a computing machine which takes advantage of quantum effects  such as \emph{quantum superposition}, \emph{entanglement}, \emph{tunnelling} and \emph{contextuality} to solve problems notoriously difficult  (\textit{i.e.,}~$\mathcal{NP}$-hard) on a classical computer. 
Numerous quantum algorithms have demonstrated improved computational complexity finally reaching the desired \emph{supremacy}~\cite{arute2019quantum}. 
Currently, we see the first practical uses of quantum computers thanks to the programmable quantum processing units (QPU) available to the research community such as D-Wave \cite{DWAVE} or IBM~\cite{IBM}. 
As this new generation of computers offers a completely different computing paradigm, directly porting classical problem formulations is far from being trivial. In fact, oftentimes we are required to revise the problem at hand altogether~\cite{Grover96afast}. 

This paper shows how to formulate the classical permutation synchronization in terms of a quadratic unconstrained binary optimization (QUBO) that is quantum computer friendly. QUBO optimizes for \emph{binary} variables and not permutations, and, hence, we are required to ensure that permutation constraints are respected during optimization. To this end, we turn all constraints into linear ones and incorporate them into QUBO. Finally, we embed our problem to a real quantum computer and show that it is highly likely to achieve the lowest energy solution, \emph{the global optimum} (see \cref{fig:teaser}). 
Our \textbf{contributions} are as follows: 
\begin{enumerate}[noitemsep,topsep=1pt]
    \item The first, to the best of our knowledge, formulation of the classical synchronization 
    in a form consumable by an adiabatic quantum computer (AQC); 
    \item We show how to introduce permutations as linear constraints into the QUBO problem; 
    \item We numerically verify the validity of our formulation in simulated experiments as well as on a real AQC (for the first time, on D-Wave Advantage $1.1$);  
    \item We perform extensive ablation studies giving insights into this new way of computing. 
\end{enumerate}
We obtain highly probable global optima up to either of: (i) eight views and three points per view, (ii) seven views with four points, or (iii) five points with three views. 
We experiment for the first time on an AQC with $5k$ qubits and perform extensive evaluations and comparisons to classical methods. 
Our approach can also be classified as the first method for quantum matching of multiple point sets. 
While we are the first to implement synchronization on quantum hardware, our evaluations and tests are proof of concepts as truly practical quantum computing is still a leap away. 

\section{Related Work}\label{sec:related} 

\parahead{Synchronization} 
The art of consistently recovering absolute quantities from a collection of ratios, \emph{synchronization}, is now the de-facto choice when it comes to bringing functions on image/shape collections into unison~\cite{salas2013slam++,cadena2016past,carlone2015initialization}. The problem is now very well studied, enjoying a rich set of algorithms. Primarily, there exists a plethora of works on group structures arising in different applications~\cite{govindu2014averaging,govindu2004lie,birdal2019probabilistic,arrigoni2017synchronization,Arrigoni2019,huang2019tensor,hartley2013rotation,Arrigoni2019,wang2013exact,chaudhury2015global,thunberg2017distributed,tron2014distributed,arrigoni2016spectral,bernard2015solution}. Some of the proposed solutions are closed form~\cite{arrigoni2016spectral,arrigoni2017synchronization,maset2017,Arrigoni2019} or minimize robust losses~\cite{hartley2011l1,chatterjee2017robust}. Others address certifiability~\cite{rosen2019se} and global optimality~\cite{briales2017cartan}. Bayesian treatment or uncertainty quantification is also considered~\cite{tron2014statistical,birdalSimsekli2018,birdal2020synchronizing,birdal2019probabilistic} as well as low rank matrix factorizations~\cite{bernard2018}. Recent algorithms tend to incorporate synchronization into deep learning~\cite{huang2019learning,gojcic2020learning}.  
This work concerns with synchronizing correspondence sets, otherwise known as \emph{permutation synchronization} (PS)~\cite{pachauri2013solving}. This sub-field also attracted a descent amount of attention: low-rank formulations~\cite{yu2016globally,wang2018multi}, convex programming~\cite{hu2018distributable}, multi-graph matching\cite{schiavinato2017synchronization}, distributed optimization~\cite{hu2018distributable} or Riemannian optimization~\cite{birdal2019probabilistic}. 

All of these approaches try to cope with the intrinsic non-convexity of the synchronization problem one way or another. Unfortunately, solving our problem on a classical computer is notoriously difficult. To the best of our knowledge, we are the firsts to address this problem through the lens of a new paradigm, \emph{adiabatic quantum computing}. 
\noindent\textbf{Quantum computing.} 
Since its motivation in the 1980s \cite{Manin1980, Feynman1982}, quantum computing  has become an active research area, both from the hardware \cite{Chuang1998, Kane1998, DiVincenzo2000, Wesenberg2009, Morello2010, Zwanenburg2013, Lanting2014etal, Lekitsch2017, Lanting2017} and algorithmic side \cite{Shor1994, Simon1994, BonehLipton1995, Grover96afast, Farhi2001, Hallgren2005, vanDamShparlinski2008, Harrow2009, Clader2013, Aimeur2013, Lloyd2014, Schuld2016, Neukart2017, DonisVelaGarciaEscartin2018, Zhao2019, Suksmono2019}. 
Quantum methods offering speedup compared to the classical counterparts have been demonstrated for
domains such as applied number theory \cite{Shor1994, BonehLipton1995, vanDamShparlinski2008,  DonisVelaGarciaEscartin2018}, 
linear algebra \cite{Hallgren2005, Harrow2009, LeGall2012, TaShma2013,  Suksmono2019}, machine learning \cite{Aimeur2013, Lloyd2014, Zhao2019} and simulation of physical systems \cite{Zalka1998, Ortiz2001, Berry2007, Kassal2008, Jordan2012}, among others. 

Quantum annealing (QA)~\cite{KadowakiNishimori1998, Brooke1999} and the adiabatic quantum evolution algorithm by Farhi \textit{et  al.}~\cite{Farhi2001} have triggered the development of adiabatic quantum computers (AQC). 
In the last decade, the technology has matured and became  accessible remotely for test and research purposes 
with the help of D-Wave 
\cite{DWave_Leap}. 
A recent benchmarking of D-Wave AQC \cite{Denchev2016} has shown that for energy landscapes with large and tall barriers, quantum annealing can achieve speed-ups of up to eight orders of magnitude compared to simulated annealing \cite{Kirkpatrick1983} running on a single core. 
AQC has also been successfully applied to traffic flow optimization while outperforming  classical methods \cite{Neukart2017}. 

\noindent\textbf{Quantum computer vision.} 
Several quantum methods for computer vision problems have been proposed in the literature including algorithms for image recognition \cite{Neven2012}, classification \cite{Boyda2017, nguyen2020regression} and facial feature learning by low-rank matrix factorization \cite{OMalley2018}. 
Recently, D-Wave has been applied to redundant object removal in object detection  \cite{LiGhosh2020}. 
While a QUBO formulation of non-maximum suppression 
was already known in the literature \cite{RujikietgumjornCollins2013}, it has been  improved in \cite{LiGhosh2020} for D-Wave 2X 
with solutions outperforming several classical  methods. 
Quantum approach for point set alignment \cite{pointCorrespondence,pointCorrespondenceSupp} solves a related problem to ours. 
It approximates rotation matrices in the affine space by basis elements which can be summed up according to the measured bitstring. 
However, this formulation cannot be easily extended to the multi-view case and does not support permutation constraints. 
In contrast, our method solves a multi-way matching problem, and we successfully deploy it on a real AQC Advantage $1.1$. 
Our use of assignment constraints is also new. 
Concurrently to us, a similar form of permutation matrix  constraints was proposed by  Benkner~\etal~\cite{SeelbachBenkner2020} for matching two graphs on AQC. 
Several previous works formulate similar penalties in different ways. 
Stollenwerk~\etal\cite{stollenwerk2019flight} address the flight assignment problem on AQC. 
Formulation of the graph isomorphism problem can include permutations~\cite{zick2015experimental, gaitan2014graph}. 
In~\cite{zick2015experimental}, an individual variable for every vertex pair of the same degree in a graph is allocated, which subsequently leads to a QUBO. 
Permutations can also be converted to a table with binary entries added as a penalty term to the target Hamiltonian~\cite{gaitan2014graph}. 

\section{Preliminaries and Technical Background}
\subsection{Synchronization}\label{sec:bg:sync}

We model the multi-view configuration as a connected undirected graph $\Graph=(\Vertex=\{1,2,\dots,n\}, \Edge\subset [n] \times [n])$ where $|\Edge|=m$ and if $(i,j)\in\Edge$ then $(j,i)\in\Edge$. Each vertex $v_i\in \Vertex$ (\textit{e.g.,} image) is associated a domain $\Dom_i$ (\textit{e.g.,} ordered points). Each edge  $(i,j)\in\Edge$ is labeled with a function  $f_{i,j}\,:\,\Dom_i\mapsto\Dom_j$ (\textit{e.g.,} correspondence). We  will refer to the edge-related entities as \emph{relative} and node- (vertex-) related entities as \emph{absolute}. Thus, we aptly call $\{f_{i,j}\}_{i,j}$ as \emph{relative maps}. In this section, we  define and explain the necessary notions  following~\cref{fig:consistency}. 
{The proofs of the theorems given in this section can be found in our supplementary material.} 

\begin{dfn}[Path, Cycle and Null Cycle]
We define a \textbf{path} to be the ordered, unique index sequence $p=\{(i_1,i_2), (i_2,i_3), \cdots, (i_{n-1},i_n)\} \in \Paths$ along $\Graph$ connecting $v_{i_1}$ to $v_{i_n}$. It is called a \textbf{cycle}, if additionally the path traces back to the starting node. Finally, following~\cite{Arrigoni2019}, we denote a cycle $c\in\Cycles$ to be a \textbf{null-cycle} of $\Graph$ if the composition of functions along $c$ leads to the identity transformation:
\begin{equation}
\label{eq:nullcycle}
    f_c = f_{1,2}\circ f_{2,3}\cdots\circ f_{(n-1),n} \circ f_{n,1} = f_{\varnothing},
\end{equation}
where $f_{\varnothing}(\x) = \x,\, \forall \x\in\Dom_1$ is the identity map and $f_c$ denotes the composite function. Intuitively, $c$ is a non-empty path in which the only repeated vertices are the first and last vertices. Note that the  direction of the action of $\circ$ matters  (\textit{e.g.,} $f_{1,2}\circ f_{2,3} \neq  f_{2,3}\circ f_{1,2}$)  and is up to the  convention. 
\end{dfn}

\begin{dfn}[$k$-cycle]\label{dfn:kcycle}
We refer to a function mapping a vertex to itself as the \textbf{1-cycle}: $f_{i,i}=f_{\varnothing}$. Similarly, a {2-cycle} would be $f_c=f_{i,j}\circ f_{j,i}$ and so on~\cite{nguyen2011optimization,huang2013consistent}. 
\end{dfn}
\begin{dfn}[Cycle Consistency]
\label{dfn:cc}
We call the graph $\Graph$ to be cycle-consistent on $\Cycles$ if $f_c = f_{\varnothing} \,\, \forall c\in \Cycles$, 
where $\Cycles$ is the set of all cycles~\cite{guibas2019condition}.
\end{dfn}
The notion of cycle consistency for \emph{directed} graphs is known as \emph{path invariance}~\cite{zhang2019path} and differs from cycle consistency as shown in~\cref{fig:consistency}. In this paper, we further assume the maps belong to the \emph{general linear group} and are \emph{isomorphisms}, \textit{i.e.,}  $f_{ji}=f_{ij}^{-1}$. 

\begin{figure}[t]
    \centering
    \includegraphics[width=\columnwidth]{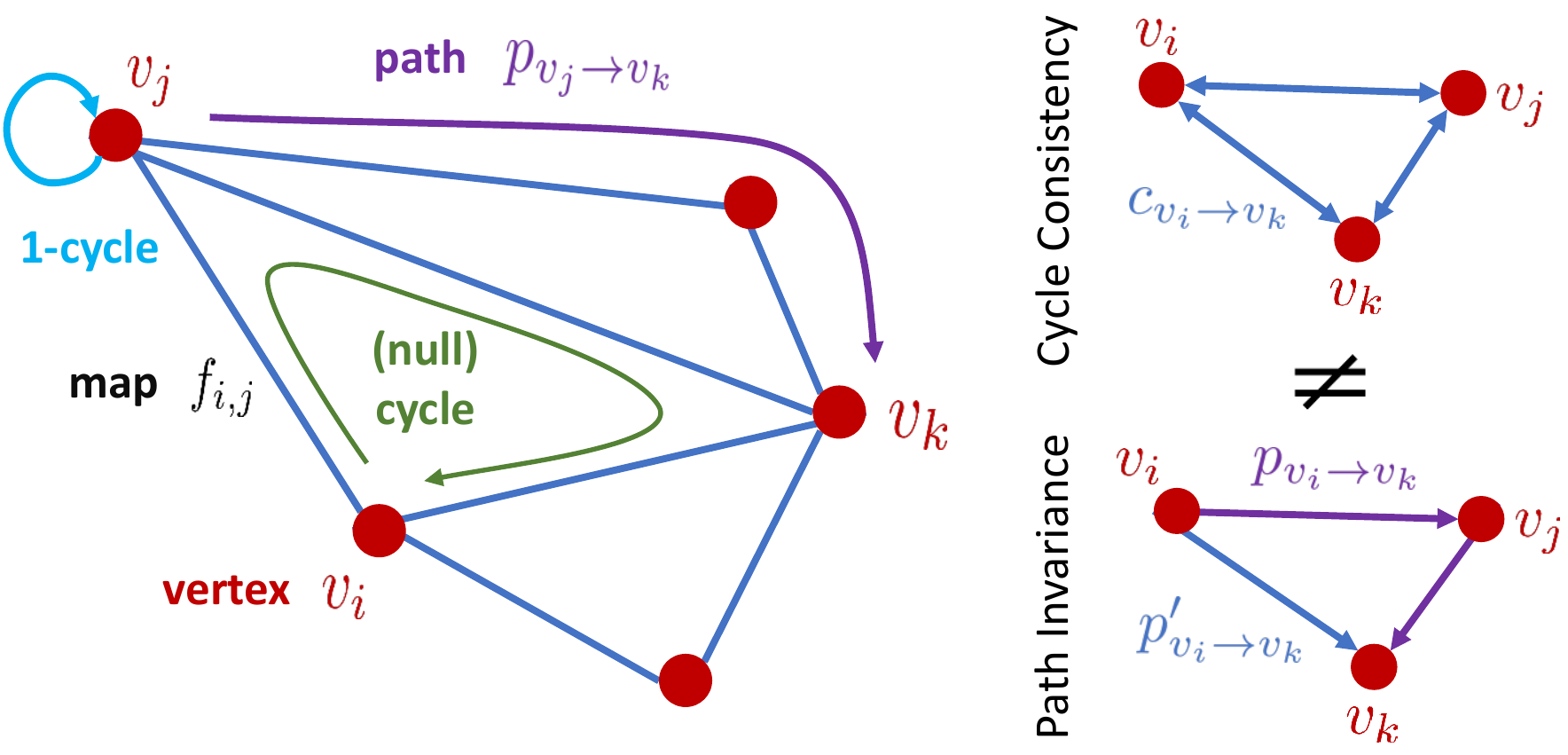}
    \caption{\textbf{(left)} Notation and illustration of the general setting we consider, \textbf{(right)} Cycle consistency \textit{vs} path invariance.\vspace{-3mm}}
    \label{fig:consistency}
\end{figure} 
\begin{remark}
Depending on the graph topology, the number of cycles may be exponential in the number of vertices. Hence, naively ensuring the consistency of large graphs according to~\cref{dfn:cc} quickly becomes intractable. Algorithms such as Guibas~\textit{et al.}~\cite{guibas2019condition} aim to satisfy the consistency of a subset of cycles $\bar{\Cycles}\subset\Cycles$ (bases), where enforcing consistency along these cycles induces consistency along all cycles of the input graph $\Graph$. However, as efficient selection of these \emph{cycle-consistency bases}~\cite{zach2010disambiguating} is a problem under investigation, we instead use the available group structure:
\end{remark}
\begin{thm}[Cycle Consistency by Construction]
\label{theorem:cycle_consistency} 
A \textbf{consistent vertex labeling} $\{f_i:\Vertex \mapsto \Dom_i\}_i$ where $(\Dom,\circ)$ forms a \textbf{group} with operation $\circ$, can be constructed by satisfying the following constraint for all $i$ and $j$:
\begin{align}
\label{eq:cc}
f_{i,j} = f_i \circ f_j^{-1}.
\end{align}
\end{thm}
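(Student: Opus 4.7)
The plan is to verify cycle consistency (Definition 3) directly by substituting the construction $f_{i,j} = f_i \circ f_j^{-1}$ into the composition along an arbitrary cycle and then telescoping. Take any cycle $c = \{(i_1,i_2),(i_2,i_3),\dots,(i_{n-1},i_n),(i_n,i_1)\}\in\Cycles$ of arbitrary length. By definition of the composite map along $c$,
\[
f_c \;=\; f_{i_1,i_2}\circ f_{i_2,i_3}\circ\cdots\circ f_{i_{n-1},i_n}\circ f_{i_n,i_1}.
\]

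Substituting the hypothesized identity into each factor yields
\[
f_c \;=\; \bigl(f_{i_1}\circ f_{i_2}^{-1}\bigr)\circ\bigl(f_{i_2}\circ f_{i_3}^{-1}\bigr)\circ\cdots\circ\bigl(f_{i_n}\circ f_{i_1}^{-1}\bigr).
\]
Since $(\Dom,\circ)$ is a group, $\circ$ is associative and every intermediate pair $f_{i_{k+1}}^{-1}\circ f_{i_{k+1}}$ collapses to the group identity $e$, which in our setting acts as the empty map $f_\varnothing$. After this telescoping, only the endpoint factors $f_{i_1}$ and $f_{i_1}^{-1}$ remain, and their composition is $f_\varnothing$ again by the group axioms. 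Hence $f_c = f_\varnothing$ for every $c\in\Cycles$, which by Definition 3 is precisely cycle consistency.

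There is no serious obstacle here; the argument is essentially a one-line telescoping computation enabled by (i) the group axioms (closure, associativity, inverses) and (ii) the fact that composition along a cycle returns to the start vertex, so the construction's endpoints close up. The only subtlety worth flagging is bookkeeping of conventions: the composition direction fixed after \cref{eq:nullcycle} and the isomorphism convention $f_{ji}=f_{ij}^{-1}$ must be compatible with the order chosen in $f_{i,j}=f_i\circ f_j^{-1}$ so that the inner factors cancel rather than accumulate. With the stated conventions this compatibility is immediate, and the same telescoping works for any cycle length, any starting vertex, and any traversal direction, which is stronger than what \cref{dfn:cc} requires.
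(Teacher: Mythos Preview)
Your proof is correct and follows essentially the same approach as the paper: substitute $f_{i,j}=f_i\circ f_j^{-1}$ into the composition along an arbitrary cycle, use associativity to regroup so that each inner pair $f_{i_{k+1}}^{-1}\circ f_{i_{k+1}}$ cancels, and conclude $f_c=f_{i_1}\circ f_{i_1}^{-1}=f_\varnothing$. Your version is slightly more explicit about the group axioms being invoked and the convention check, but the argument is identical in substance.
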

\noindent Hence,~\cref{eq:cc} is called the \emph{cycle consistency constraint}.

\begin{dfn}[Synchronization]
\label{dfn:sync}
Synchronization is the procedure of finding a consistent labeling of $\Graph$ given a collection of ratios $\{f_{i,j}\}_{i,j}$ ensuring the cycle consistency of $\Graph$:
\begin{equation}\label{eq:sync}
    \argmin_{\{f_k\}_k}\sum\limits_{(i,j)\in\Edge}   d(\hat{f}_{i,j}, f_{i,j}).
\end{equation}
$\hat{f}_{i,j}=f_i\circ f_j^{-1}$ are the estimated ratios and $d(\cdot)$ is a group-specific distance metric.
\end{dfn}
\begin{remark}
A closer look to the problem reveals that it is \textbf{non-convex} due to the composition, but \textbf{convex when} the vertices $f_j$ are fixed during optimization of $f_i$. In fact, if $f_i$ is considered to be fixed, this problem resembles an averaging under the metric $d(\cdot)$. As for different $i$ we have different averages to compute, synchronization is often referred as \textbf{multiple averaging}~\cite{fredriksson2012simultaneous,hartley2013rotation,chatterjee2013efficient,hartley2011l1}.
\end{remark}
\begin{thm}[Gauge Freedom]\label{theorem:gauge_freedom} 
The problem in~\cref{dfn:sync} is subject to a freedom in the choice of the reference or the \emph{gauge}~\cite{belov2019geometry,chatterjee2013efficient}. In other words, the solution set to~\cref{eq:sync} can be transformed arbitrarily by a common $f_g$ \ie $f_i\gets f_i\circ f_g$ while still satisfying the consistency constraint.
\end{thm}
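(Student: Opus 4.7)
The plan is to show directly that the substitution $f_i \mapsto f_i \circ f_g$ leaves every estimated relative map $\hat{f}_{i,j}$ invariant, so that both the objective in \cref{eq:sync} and the consistency constraint in \cref{eq:cc} are untouched. This reduces the claim to a one-line group calculation based on associativity and the existence of inverses.

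Concretely, let $\{f_k^\star\}_k$ be any minimizer of \cref{eq:sync}, fix an arbitrary $f_g\in(\Dom,\circ)$, and define the transformed labeling $f_k' := f_k^\star \circ f_g$. Since $\Dom$ is a group, each $f_k'$ has an inverse $(f_k')^{-1} = f_g^{-1}\circ (f_k^\star)^{-1}$. A short manipulation then gives
\begin{equation*}
\hat{f}_{i,j}' = f_i' \circ (f_j')^{-1} = f_i^\star \circ f_g \circ f_g^{-1} \circ (f_j^\star)^{-1} = f_i^\star \circ (f_j^\star)^{-1} = \hat{f}_{i,j}^\star,
\end{equation*}
so every summand $d(\hat{f}_{i,j},f_{i,j})$ in \cref{eq:sync} takes the same value at $\{f_k'\}$ as at $\{f_k^\star\}$. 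Hence $\{f_k'\}$ attains the optimum as well, and the cycle-consistency constraint of \cref{theorem:cycle_consistency} is automatically preserved by the same identity.

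There is no real obstacle here, since the argument is essentially the cancellation $f_g\circ f_g^{-1}=f_\varnothing$. The only subtleties worth flagging are (i) that the invariance exploits the fact that the loss depends on $\{f_k\}$ only through the relative quantities $f_i\circ f_j^{-1}$, not through the absolutes themselves, and (ii) that the direction of the gauge action matters: with the convention $\hat{f}_{i,j}=f_i\circ f_j^{-1}$, the ambiguity is a right action $f_i\mapsto f_i\circ f_g$ (a left action would require a different composition order). Together these observations show that the solution set of \cref{dfn:sync} is an entire orbit under this right action, which is exactly why implementations typically anchor one vertex (e.g.\ fix $f_1=f_\varnothing$) to eliminate the gauge degree of freedom before numerical optimization.
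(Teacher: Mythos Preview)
Your argument is correct and matches the paper's own proof essentially line for line: both reduce the claim to the group cancellation $(f_i\circ f_g)\circ(f_j\circ f_g)^{-1}=f_i\circ f_j^{-1}$. Your version is slightly more explicit in noting that the objective~\eqref{eq:sync} depends on the $f_k$ only through the ratios, but otherwise the approaches are identical.
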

In practice, a gauge is fixed by setting one of the vertex labels to identity: $f_1 = f_{\varnothing}$.
\begin{dfn}[Permutation Matrix]
A \textit{permutation matrix} is defined as a sparse, square binary matrix, where each column or row contains only a single non-zero entry:
\begin{equation}
\label{eq:def:perm}
\mathcal{P}_n := \{\Pm \in \{0,1\}^{n\times n} : \Pm \one_n = \one_n\,,\,\one_n^\top \Pm = \one_n^\top\}.
\end{equation}
where $\one_n$ denotes a $n$-dimensional ones vector. Every $\Pm \in \mathcal{P}_n$ is a \textit{total} permutation matrix and $P_{ij}=1$ implies that point $i$ is mapped to element $j$. Note, $\Pm^\top=\Pm^{-1}$. We also denote the \textbf{product manifold} of $m$ permutations as $\mathcal{P}_n^m$. 
\end{dfn}
\begin{dfn}[Relative Permutation]
We define a permutation matrix to be a \textbf{relative map} if it is the ratio (or difference) of two group elements $(i\rightarrow j)$: $\Pm_{ij}=\Pm_i\Pm_j^\top$.
\end{dfn}
\begin{dfn}[Permutation Synchronization Problem]
For a redundant set of measures of ratios $\{\Pm_{ij}\}$, the permutation synchronization~\cite{pachauri2013solving} seeks to recover $\{\Pm_i\}$ for $i=1,\dots,N$ such that~\cref{eq:cc} is satisfied: $\Pm_{ij}=\Pm_i\Pm_j^{-1}$.
\end{dfn}
If the input data is noise-corrupted, this \textit{consistency} will not hold and to recover the \textit{absolute permutations} $\{\Pm_i\}$, some form of a \textit{consistency error} is minimized. 

\subsection{Adiabatic Quantum Computation}\label{ssec:quantum_permutation} 
Contemporary AQC can solve QUBO over a set of pseudo-Boolean functions of the following form: \vspace{-4pt} 
\begin{equation}\label{eq:QUBO_main} 
\argmin_{\bfx \in \bfB^n} \bfx^\top \bfQ \bfx + \bfs^\top \bfx, 
\vspace{-4pt} 
\end{equation} 
where $\bfB^n$ denotes the set of binary vectors of length $n$, $\bfQ \in \mathbb{R}^{n \times n}$ is a real symmetric matrix and $\bfs$ is a real $n$-dimensional vector. 
\eqref{eq:QUBO_main} is a frequent problem 
that is known to be $\mathcal{NP}$-hard on a classical computer. 
AQA operates with \textit{qubits} obeying the laws of quantum mechanics. 
In contrast to a classical bit, a qubit $\ket{\phi}$ can continuously transition between the states $\ket{0}$ and $\ket{1}$ (the equivalents of classical states $0$ and $1$) fulfilling the equation $\ket{\phi} = \alpha \ket{0} + \beta \ket{1}$, with probability amplitudes satisfying $|\alpha|^2 + |\beta|^2 = 1$. 
AQA algorithms for problems in a non-QUBO form first have to  
\textit{encode the problem} as QUBO  \eqref{eq:QUBO_main}, which defines the former in terms of \textit{logical} qubits 
and weights ($\bfs$ and $\bfQ$) between them. 
The logical problem is then \textit{minor-embedded} to the AQA's physical qubits graph, with methods such as \cite{Cai2014}. 
AQA interprets $\bfs$ and $\bfQ$ as qubit biases and couplings, respectively, and converts them to local magnetic fields on a QPU imposed on the qubits during \textit{anneallings}, \textit{i.e.,} free evolutions of the quantum-mechanical computing system. 
The search of the optimal $\bfx$ is performed by optimizing over the \textit{hidden}\footnote{in  the sense that $\alpha$, $\beta$ cannot be revealed (measurement postulate) and non-destructive copying of $\ket{\phi}$ is not possible (no-cloning theorem \cite{WoottersZurek1982})} 
states of $n$ qubits and consequently \textit{unembedding} and \textit{measuring} them as classical bitstrings. 
The latter are then passed to the \textit{solution interpretation} step, which decodes them in the context of the original problem. 

Before annealing, all $n$ qubits are initialized in ground states of an initial \textit{Hamiltonian} $\mathcal{H}_I$ that is easy to achieve as a superposition with equal probabilities of measuring $\ket{0}$ or $\ket{1}$ for every qubit. 
A Hamiltonian is an operator defining the energy spectrum of the system and--- interpreted for computational problems--- the space of all possible solutions. 
AQA performs a series of annealings, during which $\mathcal{H}_I$ continuously alters towards the problem Hamiltonian $\mathcal{H}_P$ under the influence of the local magnetic fields. 
This \emph{instantaneous} Hamiltonian $\mathcal{H}$ can be expressed as: 
\begin{equation}
    \mathcal{H}(\tau) = [1 - \tau]\,\mathcal{H}_I + \tau\,\mathcal{H}_P, 
\end{equation}  
with the local time variable $\tau \in [0; 1]$ transitioning from $0$ to $1$ during the annealing time $t$ (\textit{e.g.,} $20$ $\mu$s). 
According to the adiabatic theorem of quantum mechanics  \cite{BornFock1928}, the system will likely remain in the ground state of 
$\mathcal{H}_P$ by the end of the anneal, assuming a \textit{sufficiently long} $t$ and despite a highly non-convex energy landscape of the problem. 
This is due to the quantum effects of superposition, entanglement and  tunnelling. 
\textit{Superposition} enables the optimization to be performed on all possible qubit states simultaneously; it allows to operate on a $2^n$-dimensional state space spanned by $n$ qubits. 
During quantum computation, \textit{entangled} states are created, \textit{i.e.,} the states of multiple qubits which cannot be described independently from each other. 
\textit{Tunnelling} enables transition through the barriers (in geometric interpretation). 
For a more comprehensive overview of the AQC foundations, see \cite{KadowakiNishimori1998, Farhi2001, DWAVE} as well as Secs.~2 and 3 of \cite{pointCorrespondence}. 

\section{Quantum Synchronization}
\label{sec:method}
Suppose a multi-view configuration where we are given $n$ points in each of the $m$ views. Points in view $i$ relate to the ones in view $j$ via a permutation $\Pm_{ij}$ (see~\cref{sec:bg:sync}). $\Pm_{ij}$ is usually obtained independently for each pair in the edge set $\Edge$ and hence is \emph{noisy}. 
Following~\cite{birdal2019probabilistic}, we see the permutation synchronization as a probabilistic inference problem, where we will be interested in the following quantities: 
\begin{enumerate}[itemsep=2pt,topsep=1pt,leftmargin=*]
\item Maximum a-posteriori (MAP): 
\begin{align}\label{eq:map}
\X^\star = \argmax\limits_{\X \in \Pset_n^m} \log p(\X | \Pm )
\end{align}
where 
$\log p(\X | \Pm ) =^+ - \beta  \sum_{(i,j)\in \Edge} \|\Pm_{ij} - \X_i \X_j^\top \|^2_\mathrm{F} $, and $=^+$ denotes equality up to an additive constant. 
\item The full posterior distribution: $p(\X|\Pm) \propto p(\Pm,\X)$.
\vspace{1pt}
\end{enumerate}
Here, $\X^\star$ denotes the entirety of the sought permutations, and, similarly, $\Pm$ is the collection of all specified pairwise permutations.
The MAP estimate is often easier to obtain and useful in practice. On the other hand, samples from the full posterior can provide important additional information, such as \emph{uncertainty}. Not surprisingly, the latter is a much harder task, especially considering the discrete nature of our problem.
Each AQC algorithm includes problem encoding, 
% four stages: 
% \textit{preparation of the weights}, 
minor embedding, 
% \textit{sampling} and \textit{unembedding},
sampling and solution interpretation, as described in~\cref{ssec:quantum_permutation}. 
In what follows, we will describe the problem  encoding and the preparation of  $\bfQ$  in~\cref{eq:QUBO_main}. 

\subsection{Permutation Synchronization as QUBO}
We first re-write the synchronization loss in~\cref{eq:map} as a quadratic assignment problem (QAP) that is more friendly for adiabatic optimization, and later insert the permutations as linear constraints into the formulation. 

\begin{prop}\label{prop1} 
Permutation synchronization under the Frobenius norm can be written in terms of a QUBO: 
\begin{align}
    \argmin\limits_{\{\X_i \in \Pset_n\}} \sum_{(i,j)\in \Edge} \|\Pm_{ij} - \X_i \X_j^\top \|^2_\mathrm{F}=\argmin\limits_{\{\X_i \in \Pset_n\}} \, \x^\top \Q^\prime \x.\nonumber
\end{align}
Here, $\x=[\cdots \x_i^\top \cdots]^\top$ and $\x_i=\vecm(\X_i)$
where $\vecm(\cdot)$ acts as a vectorizer. $\Q^\prime$ is then a matrix of the form:
\begin{align}\label{eq:Q_prime} 
\Q^\prime=
    -\begin{bmatrix}
    \Id \otimes \Pm_{11} & \Id \otimes \Pm_{12} & \cdots & \Id \otimes \Pm_{1m}\\
    \Id \otimes \Pm_{21} & \Id \otimes \Pm_{22} & \cdots & \Id \otimes \Pm_{2m}\\
    \vdots & \vdots & \ddots & \vdots\\
    \Id \otimes \Pm_{m1} & \Id \otimes \Pm_{m2} & \cdots & \Id \otimes \Pm_{mm}\\
    \end{bmatrix}.
\end{align}
\end{prop}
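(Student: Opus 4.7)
The plan is to reduce the squared Frobenius norm to its canonical cross-term by peeling off constants, and then apply the standard vectorization identity to rewrite the resulting bilinear trace as a QUBO-type quadratic form with the indicated block-Kronecker structure.

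First I would expand each summand as
\begin{equation*}
\|\Pm_{ij} - \X_i \X_j^\top\|_\mathrm{F}^2 = \|\Pm_{ij}\|_\mathrm{F}^2 - 2\,\mathrm{tr}(\Pm_{ij}^\top \X_i \X_j^\top) + \|\X_i \X_j^\top\|_\mathrm{F}^2.
\end{equation*}
The first term is fixed data, so constant. For the third term I would use the definition of $\Pset_n$ from \cref{eq:def:perm}: every $\X_k \in \Pset_n$ is orthogonal, hence $\|\X_i \X_j^\top\|_\mathrm{F}^2 = \mathrm{tr}(\X_j \X_i^\top \X_i \X_j^\top) = \mathrm{tr}(\Id) = n$, again constant over the feasible set. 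Thus, up to an additive constant independent of $\X$, minimising the objective amounts to minimising $-2\sum_{(i,j)\in\Edge}\mathrm{tr}(\Pm_{ij}^\top \X_i \X_j^\top)$.

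Next I would invoke the identities $\mathrm{tr}(A^\top B) = \vecm(A)^\top \vecm(B)$ together with $\vecm(BXC) = (C^\top \otimes B)\vecm(X)$. Applied with $B=\Pm_{ij}$, $X=\X_j$ (after cycling the trace), they give
\begin{equation*}
\mathrm{tr}(\Pm_{ij}^\top \X_i \X_j^\top) = \mathrm{tr}(\X_i^\top \Pm_{ij} \X_j) = \x_i^\top (\Id \otimes \Pm_{ij})\,\x_j,
\end{equation*}
which is exactly the $(i,j)$-block bilinear form that appears (with a minus sign) in $\Q^\prime$. Extending the convention $\Pm_{ij}=\zero$ whenever $(i,j)\notin\Edge$ (so the off-edge blocks vanish), summation over all block indices $1\le i,j\le m$ collapses to the edge sum, yielding $-\sum_{(i,j)\in\Edge}\mathrm{tr}(\Pm_{ij}^\top \X_i \X_j^\top) = \x^\top \Q^\prime \x$ with $\x=[\x_1^\top,\ldots,\x_m^\top]^\top$. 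The overall positive factor of $2$ and the additive constants drop out of the $\argmin$, establishing the claimed equivalence.

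The only subtle step is the vec-trick bookkeeping: one must pick the right cyclic rotation of the trace so that $\X_i$ sits on the left and $\X_j$ on the right, and then match the Kronecker factor ordering ($\Id$ on the left, $\Pm_{ij}$ on the right) to recover precisely $\Id \otimes \Pm_{ij}$ rather than $\Pm_{ij} \otimes \Id$. A minor consistency check to include is that, since $(i,j)\in\Edge \Leftrightarrow (j,i)\in\Edge$ and $\Pm_{ji}=\Pm_{ij}^\top$, the block matrix in \cref{eq:Q_prime} is symmetric, as required for a valid QUBO coupling matrix.
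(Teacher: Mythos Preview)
Your proposal is correct and follows essentially the same route as the paper: expand the Frobenius norm, drop the two constant terms using orthogonality of permutations, and apply the vec/Kronecker identity to obtain $\x_i^\top(\Id\otimes\Pm_{ij})\x_j$, which stacks into $\x^\top\Q^\prime\x$. Your extra bookkeeping (the $\Pm_{ij}=\zero$ convention for non-edges and the symmetry check via $\Pm_{ji}=\Pm_{ij}^\top$) is a welcome clarification that the paper leaves implicit.
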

\begin{proof}\renewcommand{\qedsymbol}{}
The steps are intuitive to follow and an expanded proof is included in the supplementary material:
\begin{align}
\X^\star &= \argmin\limits_{\X\in \Pset_n^m} \sum_{(i,j)\in \Edge} \|\Pm_{ij} - \X_i \X_j^\top \|^2_\mathrm{F} \\
&= \argmin\limits_{\X\in \Pset_n^m} \,2N^2 n-2\sum_{(i,j)\in \Edge} \mathrm{tr}(\X_j\X_i^\top\Pm_{ij})\\
&=\argmin\limits_{\X\in \Pset_n^m} \,-\sum_{(i,j)\in \Edge} \vecm(\X_i)^\top (\Id \otimes \Pm_{ij} )\vecm(\X_j) \nonumber\\
&=\argmin\limits_{\X\in \Pset_n^m} \, \x^\top \Q^\prime \x.\label{eq:quprog}
\end{align}\vspace{-5mm}
\end{proof}
The Hessian of this problem is given by $\Q^\prime$ itself. Hence, the problem is only convex and solvable by algorithms such as \emph{interior-point} or \emph{trust-region} methods when $\Q^\prime$ is \emph{positive definite}. The indefinite problems can be solved via \emph{active-set} methods if the variables are relaxed to the set of reals. For discrete variables and when $\Q^\prime$ is not {positive definite}, this is a variant of $\mathcal{NP}$-hard \emph{integer quadratic problem}. We will instead show how to use the recent quantum computers to obtain the global minimum. 

\parahead{Formulating the QUBO synchronization}
The problem in~\cref{eq:quprog} has an added difficulty of being a discrete combinatorial optimization problem over the \emph{product manifold of permutations}. Replacing permutations with different choices of matrices lead to different relaxations, for example: (i) positivity allows for semi-definite programming~\cite{huang2013consistent}, (ii) orthonormality allows for spectral solutions~\cite{maset2017}, and (iii) doubly stochastic relaxation can allow for Riemannian optimization~\cite{birdal2019probabilistic}. However, all of these methods have to be followed by a projection step onto the discrete Permutohedron, often cast as an assignment problem and solved via the celebrated Hungarian algorithm. QUBO enables us to solve this problem \emph{without continuous relaxations} in a globally optimal manner by rephrasing~\cref{eq:quprog} in terms of binary variables $\bfB$ at our disposal:
\begin{align}
\argmin\limits_{\x \in \bfB} \, \x^\top \Q^\prime \x. 
\end{align}

\parahead{Permutations as linear constraints}
The binary variables $\x\in\bfB$ are a superset of the product-permutations. In other words, the solution to~\cref{eq:quprog} need not result in permutations once the matrices corresponding to each node are extracted. We propose to encourage the solution towards a set of permutations by introducing linear constraints $\Ag\x=\bg$ such that the optimization adheres to the definition of a permutation: rows and columns sum to one as in~\cref{eq:def:perm}. Given $\x_i=\vecm(\X_i)$, this amounts to having $\bg_i=\one$ and 
\begin{equation} 
    \Ag_i=\begin{bmatrix}
    \Id \otimes \one^\top \\
    \one^\top \otimes \Id
    \end{bmatrix}.
\end{equation}
Put simply, the matrix $\Ag_i$ is assembled as follows: in row $j$ with $1\leq j \leq n$, the ones are placed in columns $(j-1)\cdot n + 1 $ to $(j )\cdot n $. In a row $j$ with $ j>n$, ones will be placed at $ (j-n )+ p\cdot n $ for $p \in \{ 0,...,n-1 \} $. To enforce the permutation-ness of all the individual $\x_i$ that make up $\x\in\R^{n^2\times m}$, we construct a $n^2 \times 2n$ block-diagonal matrix $\Ag = \mathrm{diag}(\Ag_1, \Ag_2, \dots, \Ag_m)$. 

\parahead{Introducing linear constraints into QUBO}
We now extend our formulation by introducing the equality constraints $\Ag\x=\bg$ into the optimization.

\begin{prop}\label{prop2}
The constrained minimization: 
\begin{align}
\argmin\limits_{\x \in \bfB} \, \x^\top \Q^\prime \x \quad\text{ s.t. }\quad \Ag\x=\bg
\end{align}
can be turned into an (unconstrained) QUBO 
\begin{align}\label{eq:unconstrained_QUBO} 
    \argmin\limits_{\x \in \bfB} \, \x^\top \Q \x + \s^\top\x, 
\end{align}
where $\Q=\Q^\prime+\lambda\Ag^\top\Ag$ and $\s=-2\lambda\Ag^\top\bg$.
\end{prop}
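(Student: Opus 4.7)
The plan is to use the standard quadratic penalty reformulation: replace the hard constraint $\Ag\x=\bg$ by the soft penalty $\lambda\,\|\Ag\x-\bg\|_2^2$ added to the original objective, and then algebraically fold that penalty into a QUBO of the form in \cref{eq:unconstrained_QUBO}. First I would write
\begin{equation}
\|\Ag\x-\bg\|_2^2 = (\Ag\x-\bg)^\top(\Ag\x-\bg) = \x^\top \Ag^\top\Ag\,\x \;-\; 2\bg^\top\Ag\,\x \;+\; \bg^\top\bg,
\end{equation}
using that $\x^\top \Ag^\top \bg = \bg^\top \Ag\x$ is a scalar, so the cross-terms combine into $-2\bg^\top \Ag\x = (-2\lambda\Ag^\top\bg)^\top\x$. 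Adding $\lambda$ times this to $\x^\top\Q'\x$ gives exactly $\x^\top(\Q'+\lambda\Ag^\top\Ag)\x + (-2\lambda\Ag^\top\bg)^\top\x + \lambda\bg^\top\bg$, matching the claimed $\Q$ and $\s$ up to the additive constant $\lambda\bg^\top\bg$, which is immaterial to $\argmin$.

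The second step is to argue that, for a suitable choice of $\lambda$, the minimizer of the unconstrained QUBO actually satisfies $\Ag\x=\bg$, so that the two problems share an optimizer. Since $\x\in\B^n$ and $\Ag,\bg$ have integer entries, $\Ag\x-\bg$ is integer-valued; hence any infeasible $\x$ incurs a penalty of at least $\lambda$. On the other hand, the original quadratic $\x^\top\Q'\x$ is bounded over the finite set $\B^n$ by some constant $M$ (e.g.\ $M = \sum_{ij}|Q'_{ij}|$). Picking $\lambda > M$ (or, more tightly, $\lambda$ larger than the gap between the best feasible and worst overall value of $\x^\top\Q'\x$) guarantees that every infeasible $\x$ has strictly larger unconstrained objective than every feasible $\x$; therefore the global minimizers of the constrained and unconstrained problems coincide.

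The main obstacle is precisely this equivalence between the constrained and penalized minimizers: the algebra is routine, but one has to justify that the penalty method is exact in the binary regime, which is not automatic in the continuous setting. The integrality of $\Ag\x$ for $\x\in\B^n$ is what rescues the argument and yields a finite, explicit threshold on $\lambda$; in practice one tunes $\lambda$ empirically to stay within qubit dynamic range while still enforcing feasibility. With that in hand, dropping the constant $\lambda\bg^\top\bg$ completes the identification with the QUBO form $\argmin_{\x\in\B^n}\x^\top\Q\x+\s^\top\x$.
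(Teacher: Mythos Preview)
Your approach is essentially the paper's: replace the hard constraint by the quadratic penalty $\lambda\|\Ag\x-\bg\|_2^2$, expand, and read off $\Q$ and $\s$ after dropping the constant $\lambda\bg^\top\bg$. The paper's proof in the supplement stops at exactly that algebraic identification and never argues that the penalized and constrained minimizers coincide; it simply treats $\lambda$ as a tunable regularizer and studies its effect empirically in the experiments. Your second step---exploiting that $\Ag\x-\bg$ is integer-valued on $\bfB^n$ so any infeasible $\x$ incurs penalty at least $\lambda$, and then choosing $\lambda$ larger than the range of $\x^\top\Q'\x$ over the finite hypercube---is a correct exact-penalty argument that the paper itself does not supply. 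So your proposal subsumes the paper's proof and adds the missing justification for equivalence of the two $\argmin$'s.
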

Both $\Q^\prime$ and $\Q$ are sparse matrices. We provide their sparsity patterns and the proof of~\cref{prop2} in our supplement. 
We finally map this modified QUBO $(\Q,\s)$ onto D-Wave. 

\parahead{Encoding, Sampling and Interpretation} 
We fix the gauge by letting the first matrix to be identity: $\X_1=\Id$. Once \textit{QuantumSync} terminates, the measured bitstring can be directly interpreted as the solution permutations after reordering into $m - 1$ matrices of dimension $n\times n$. To explain the posterior landscape, we propose to sample from many low-energy states using the same quantum annealing. Further details are presented in the supplement. 

\begin{table}[t]
  \centering
  \caption{Evaluations on Willow Dataset.}
  \setlength{\tabcolsep}{3pt}
  \resizebox{\columnwidth}{!}
  {
    \begin{tabular}{lcccc|c}
          & Car   & Duck  & Motorbike & Winebottle & Average \\
\cmidrule{2-6}    Exhaustive & \textcolor{black}{\textbf{0.84  $\pm$  0.104}} & \textbf{0.91  $\pm$  0.115} & \textbf{0.82  $\pm$  0.10} & \textbf{0.95  $\pm$  0.096} & \textbf{0.88  $\pm$  0.104} \\
    EIG   & \textcolor{gray}{0.81  $\pm$  0.083} & \textcolor{gray}{0.86  $\pm$  0.102} & \textcolor{gray}{0.77  $\pm$  0.059} & \textcolor{gray}{0.87  $\pm$  0.107} & \textcolor{gray}{0.83  $\pm$  0.088} \\
    ALS   & \textcolor{black}{\textbf{0.84  $\pm$  0.095}} & 0.90  $\pm$ 0.102 & 0.81  $\pm$  0.078 & 0.94  $\pm$  0.092 & 0.87  $\pm$  0.092 \\
    LIFT  & \textcolor{black}{\textbf{0.84  $\pm$  0.102}} & 0.90  $\pm$  0.103 & 0.81  $\pm$  0.078 & 0.94  $\pm$  0.092 & 0.87  $\pm$  0.094 \\
    Birkhoff & \textbf{0.84  $\pm$  0.094} & 0.90  $\pm$  0.107 & 0.81  $\pm$  0.079 & 0.94  $\pm$  0.093 & 0.87  $\pm$  0.093 \\
    D-Wave(Ours) & \textcolor{black}{\textbf{0.84  $\pm$  0.104}} & 0.90  $\pm$  0.104 & 0.81  $\pm$  0.080 & \textcolor{gray}{\textbf{0.93  $\pm$  0.095}} & 0.87  $\pm$  0.096 \\
    \end{tabular}%
    }
  \label{tab:willow}\vspace{-3.5mm}
\end{table}%

\vspace{-2mm}\section{Experiments and Evaluations}
\label{sec:exp}\vspace{-2mm}

\parahead{Real dataset} 
To showcase that quantum computers offer a promising way to solve the challenging multi-view matching problems, we extract four categories (\emph{duck}, \emph{car}, \emph{winebottle}, \emph{motorbike}) of Willow Object  Classes~\cite{cho2013learning} composed of 40 RGB images each, acquired \emph{in the wild} (see Fig.~\ref{fig:willowcars}). The images suffer from significant pose, lighting and environment variation. Hence, ten keypoints are manually annotated on each image. We use the first four of these keypoints and create 35 small problems for each category by creating a fully connected graph composed of all four consecutive frames. We follow~\cite{wang2018multi} and extract local features from a set of $227\times 227$ patches centered around the annotated landmarks, using Alexnet~\cite{krizhevsky2012imagenet} pretrained on ImageNet~\cite{deng2009imagenet}. The feature map responses of \emph{Conv4} and \emph{Conv5} layers are then matched by the Hungarian algorithm~\cite{munkres1957algorithms} to initialize the synchronization. As the data is manually annotated, the ground-truth relative maps are known. 

\parahead{Synthetic dataset}
For a controlled evaluation of our method, similar to~\cite{birdal2019probabilistic}, we generate synthetic datasets composed of graphs with $m=|\Vertex|$ nodes and $|\Edge|=m(m-1)$ edges. At each node $i$ we generate $n$ points mappable by a permutation $\Pm_{ij}$ to $n$ other points at node $j$. Hence, all the permutation matrices $\Pm_i$ or $\Pm_{ij}$ are total \textit{i.e.,} of size $n\times n$. While the graph is by default fully connected, in certain experiments we randomly drop certain edges to have a completeness $C$ where $0.5<C<1$. For instance, $C=0.75$ means that only $75\%$ of the edges are actively present. We optionally perturb the relative permutations by swapping a percentage of the rows and columns. We call this the \emph{swap ratio} and denote it as $\sigma$, where $0\leq\sigma\leq 0.25$. 

\parahead{Evaluation methodology} 
We implement our algorithm on D-Wave Advantage $1.1$ and compare it against the state-of-the-art methods   MatchEIG~\cite{maset2017}, MatchALS~\cite{zhou2015multi}, MatchLift~\cite{huang2013consistent},  MatchBirkhoff~\cite{birdal2019probabilistic} as well as to the \emph{exhaustive} solution obtained by enumerating all possible permutations. 
In all of our evaluations we report the number of bits correctly detected and call this metric \emph{accuracy}. For synthetic evaluations involving noise, we generate seven random problems with the same $\sigma$ and average the results.

\insertimageC{1}{cars.jpg}{A random \textit{car} example 
% from the \emph{cars} class 
from Willow Object Classes~\cite{cho2013learning}.\vspace{-3mm}}{fig:willowcars}{t!}
\begin{figure}[b]
    \centering 
    \includegraphics[width=\linewidth, height= 85pt]{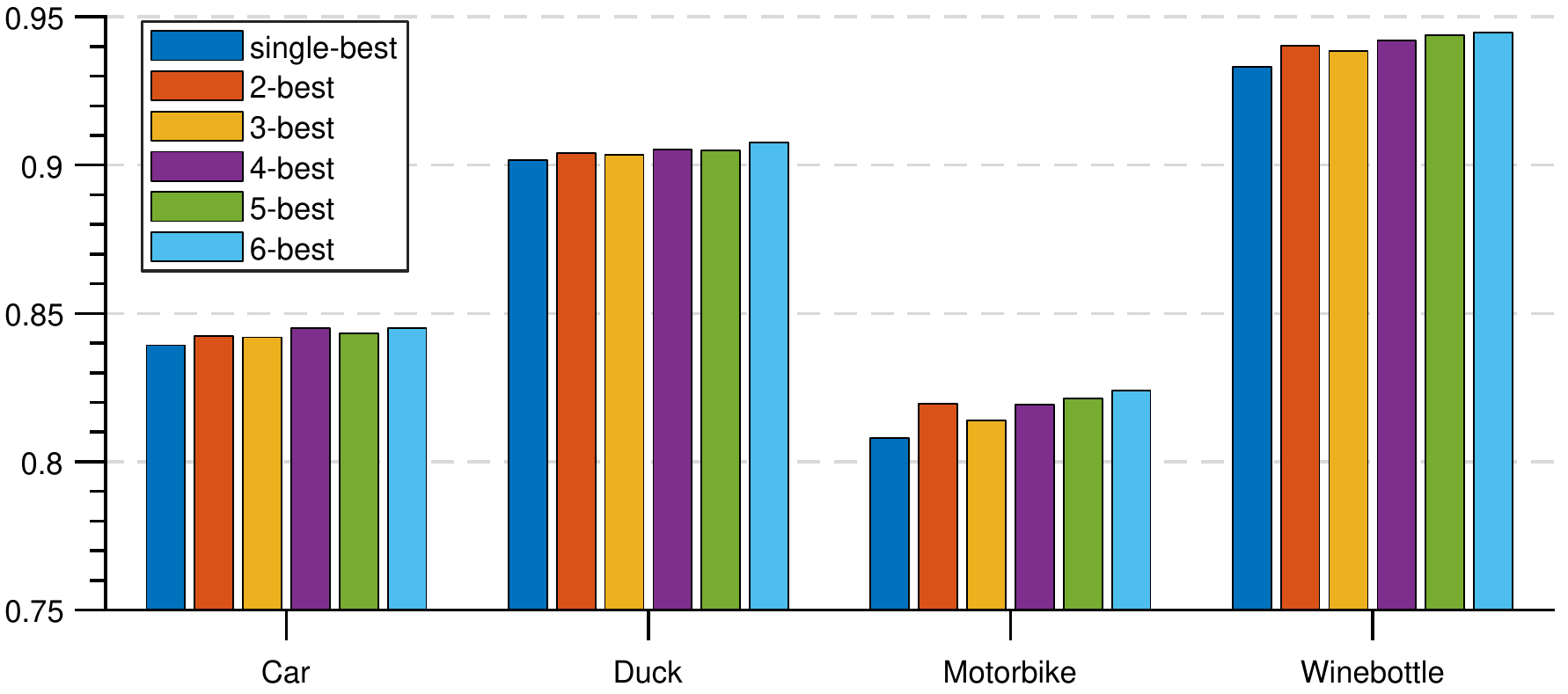} 
    \caption{Samples from the quantum annealer can be helpful in improving the solution quality or reporting  uncertainty.\vspace{-3mm}
    } 
    \label{fig:uncertainty} 
\end{figure} 
\subsection{Evaluations on D-Wave Advantage}\label{ssec:evaluations_Dwave} 
In this section, we describe our experiments on D-Wave Advantage system 1.1, which is an AQC with $5436$ qubits arranged on a graph of cells with eight qubits each. 
Every qubit operates under ${\approx}15.8 mK$ temperature and is connected to $15$ other qubits from the same or other cells which enables compact minor embeddings, \textit{i.e.,} mappings of target QUBOs to the processor topology with shorter chains of physical qubits compared to the previous $2000Q$ \cite{2020arXiv200300133B}. 
To map logical qubits which are connected to more than $15$ other logical qubits, chaining of physical qubits is necessary, \textit{i.e.,} entanglement between qubit states. 
These chains are maintained by auxiliary magnetic fields and can break during annealings. 
The mechanism of resolving broken chains is majority voting. 
Minor embeddings are performed automatically via~\cite{Cai2014}, and each annealing takes $20 {\mu}s$ -- apart from AQC problem transmission overheads which can sum up to $0.1$ sec, minor embedding time (which, however, can be pre-computed for multiple problem sizes) and AQC waiting time. 
We access D-Wave machines remotely through \textit{Leap2} \cite{DWave_Leap}. 
The total AQC runtime spent in the experiments amounts to ${\approx}15.5$ min. ($> 5 \cdot 10^5$ annealings in total). % 
We show exemplary embeddings in our supplement. 
% material. 

\parahead{Evaluations on the real dataset}
We begin by putting D-Wave to test on synchronizing real data.~\cref{tab:willow} shows the accuracy of the best (lowest energy) solution we attain for different categories, as well as averaged over all classes. It is seen that while quantum solution is not the top-performer, it is certainly on par with the well engineered approaches of the state of the art. Moreover, on a theoretical note, it enjoys polynomial speedup.
Note that the exhaustive solution performs the best on this dataset. This further encourages research on making better quantum computers reducing the gap between the true and the machine-computed optima. 

\parahead{Explaining the posterior} 
While not being a true Bayesian inference, quantum annealing can  provide samples from the energy landscape or the Hamiltonian. The samples could be used either, for instance, in associating a confidence to solutions or maybe in scenarios like active learning. To assess the usability of the samples, instead of looking into the lowest energy solution, we look at $k$-lowest energy solutions of the previous real data evaluation. We process them jointly and correct the erroneous bits by replacing  them with the most frequent ones over all the  samples.~\cref{fig:uncertainty} shows on the real  benchmark that for increasing $k$ values, the  accuracy also increases. This validates that  different samples from a quantum annealer can be  informative in {exploring} the energy landscape  \textit{i.e.,} posterior defined  in~\cref{sec:method}. 
% explaining 

\begin{figure}
    \centering
    \includegraphics[width=\linewidth]{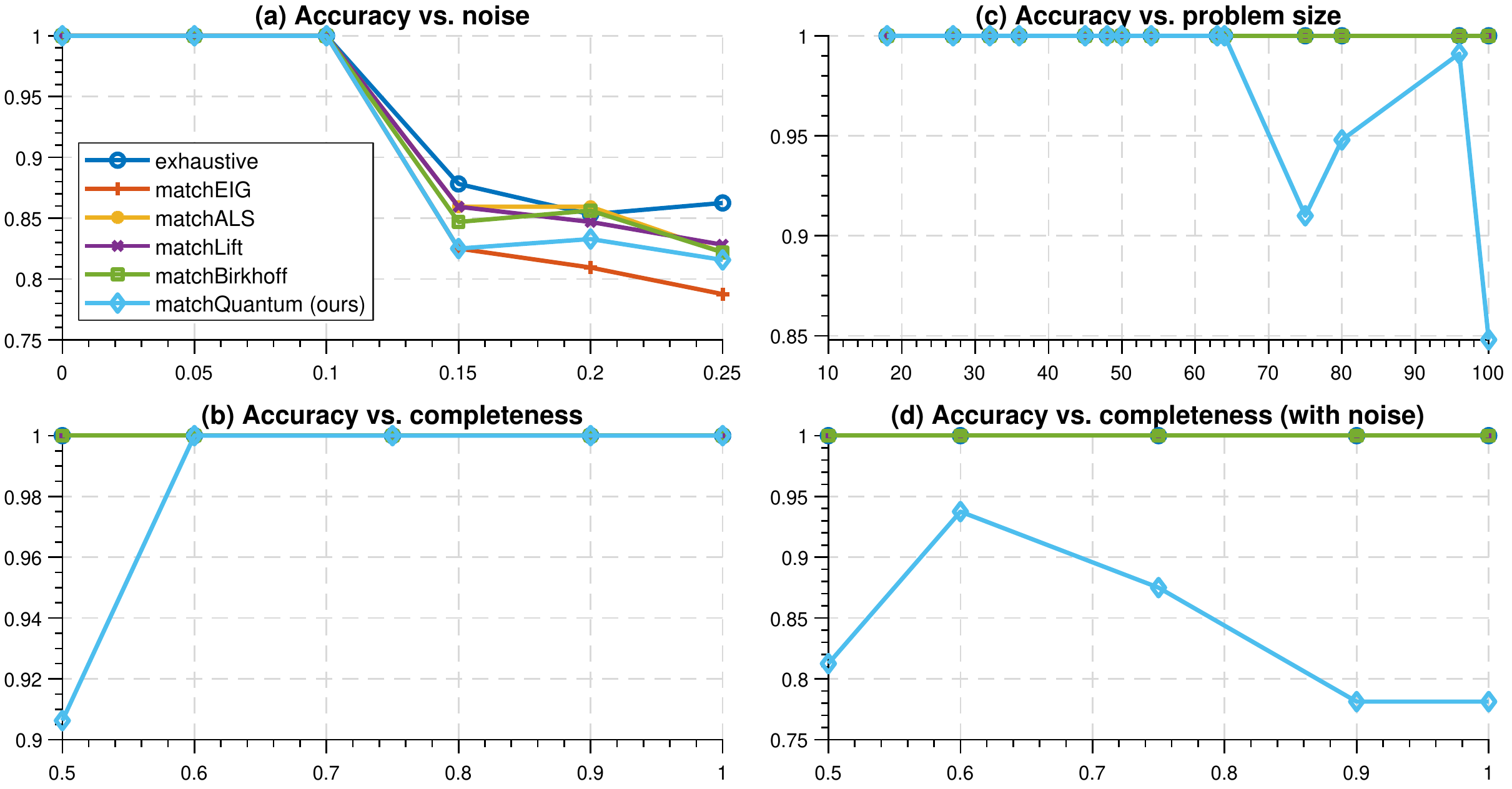} 
    \caption{Evaluations on the synthetic dataset (n = 4, m = 4).\vspace{-4mm}} 
    \label{fig:evaluations_synthetic} 
\end{figure} 
\parahead{Evaluations on the synthetic dataset} 
We now interrogate various characteristics of our quantum approach, \emph{QuantumSync}. Our results are plotted in~\cref{fig:evaluations_synthetic}. First, we inspect the behaviour under increasing noise.~\cref{fig:evaluations_synthetic}(a) shows that all the methods can handle low noise regimes. The increasing noise similarly impacts the methods we test. Our approach while not being superior to any, is on par.~\cref{fig:evaluations_synthetic}(b) shows that \textit{QuantumSync} is significantly impacted from the increased problem sizes ($mn^2$). This shows perhaps the most important limitation of current quantum computers, \textit{i.e.,} we can only reliably handle the problems of size $<64$. Note that, this is also the size of our real sub-Willow dataset. The rest of the plots (c,d) show the impact of connectivity (graph completeness) on the solution quality. With noise ($\sigma=0.1$ for this case) sparse graphs cause significant problems. 

\parahead{Parameter selection}
We now evaluate, with the help of our synthetic data, how the solution and its probability changes w.r.t.~the chain strength $\chi$ and $\lambda$. 
Both parameters strongly influence the probability to measure optimal solutions. 
$\chi$ being too high keeps the chains unbroken during annealings, but adversely affects the solutions. 
Too low $\chi$ lead to broken chains which, however, in many cases can be resolved with majority voting. 
Initial trial experiments help us to identify $\chi = 3.0$ and $\lambda = 2.5$ as optimal parameters which are kept fixed in our experiments. 
Note that the ablative study for $\lambda$ on a classical computer in~\cref{ssec:ablation_studies_classical_computer} also suggests $\lambda > 2.0$ as a suitable value for a range of $(n, m)$. 
~\cref{tab:experiment_DWave1} highlights the effect of varying $\chi$
in the tests with $n = 3, m = 3$ and $n=4, m = 4$ on Advantage $1.1$. 
We report the number of logical qubits of the target problem $n_l$, the number of physical qubits required to embed the problem on AQC $n_{ph}$, the average maximum chain length of the embedding $l$ and the average number of annealings leading to the optimal solution out of $200$ samples. 
Each test for each $\chi$ is repeated $50$ times. 

\begin{figure}[t]
    \centering 
    \includegraphics[width=\linewidth]{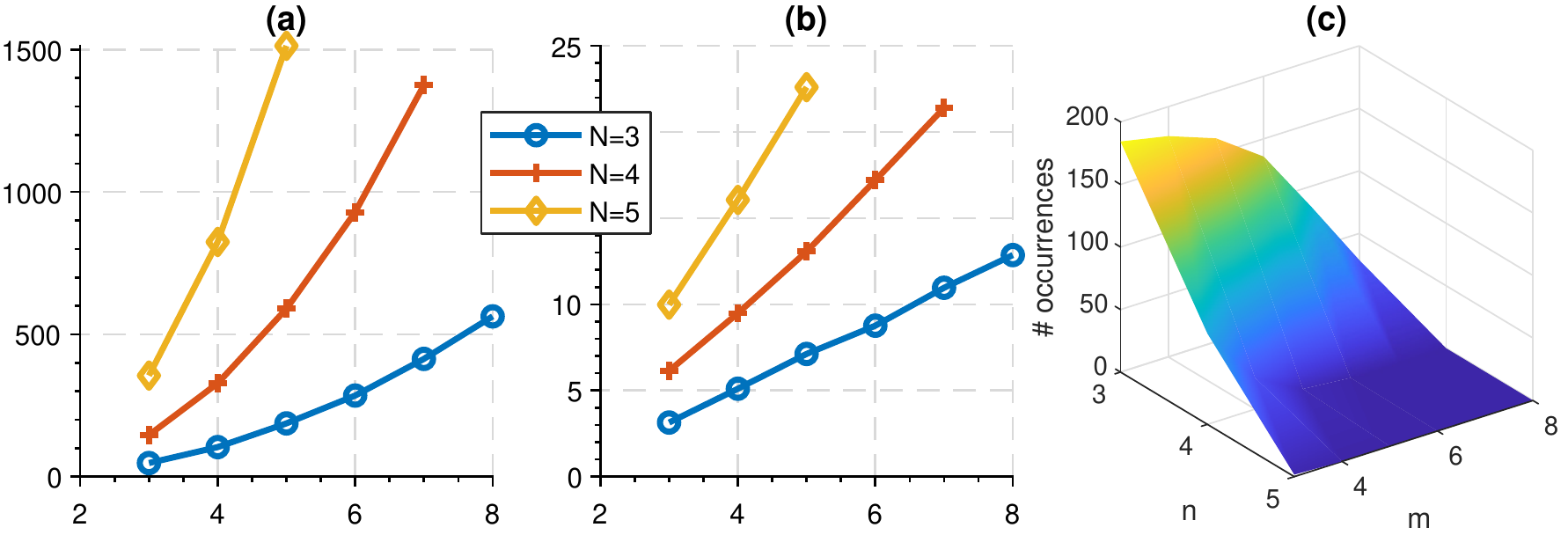} 
    \caption{For different $n$ and an increasing number of views $m$,  (\textbf{a}) plots the number of qubits required to map a problem; and (\textbf{b}) at $\chi = 3.0$, shows the required maximum chain length required to embed the problem on Advantage $1.1$. (\textbf{c}) plots the average number of measured optimal solutions in $200$ samples, for different pairs of $n$ and $m$ (averaged over $50$ repetitions).
    \vspace{-3mm}
    } 
    \label{fig:qubits} 
\end{figure}

\begin{table}[h] 
  \centering 
  \caption{
  The table summarises the effect of varying chain strength $\chi$ on the number of measured optimal solutions out of $200$, for $n=3, m=3$ (first row) and $n=4,m=4$ (second row). 
  } 
  \setlength{\tabcolsep}{3pt} 
  \resizebox{\columnwidth}{!} 
  { 
    \begin{tabular}{ccccccc} 
         $\boldsymbol{n_{l}}$ / ${\boldsymbol{n_{ph}}}$  / $\boldsymbol{l}$  & $\boldsymbol{\xi = 1}$ &  $\boldsymbol{\xi = 2}$ & $\boldsymbol{\xi = 3}$ &  $\boldsymbol{\xi = 4}$ & $\boldsymbol{\xi = 5}$ \\ 
    \cmidrule{1-6} 
    $18$ / $48$ / $3.2$ & $120.9$ $\pm$ $48.8$ & $187.8$ $\pm$ $11.5$ &  $180.9$ $\pm$ $11.2$ & $150.3$ $\pm$ $19.9$ & $85.3$ $\pm$ $23.6$ \\ 
    $48$ / $325$ / $9.5$ & $0.1$ $\pm$ $0.30$ & $9.9$ $\pm$ $10.08$ &  $17.3$ $\pm$ $14.0$ & $6.1$ $\pm$ $7.77$ & $1.42$ $\pm$ $2.72$ 
    \end{tabular} 
    }\vspace{-2mm} 
  \label{tab:experiment_DWave1}% 
\end{table}% 
\parahead{Different problem sizes and minor embedding} 
Next, we systematically analyze which problem sizes can be successfully embedded on Pegasus topology of Advantage system $1.1$ and solved globally optimally with high probability over $200$ samples. 
In each configuration of $n$ and $m$ which can be successfully solved on D-Wave, we report the average number of measurements corresponding to the global optimum and its standard deviation over $50$ repetitions with $200$ annealings each. 
We also analyze minor embedding in the same experiments and report the average number of physical qubits used in the embedding, the maximum chain lengths along with their standard deviations over $50$ runs. 
Fig.~\ref{fig:qubits} visualises the experimental outcomes. 
We see that with the increasing number of logical qubits, the number of physical qubits required for the embedding as well as the maximum chain length increase (Fig.~\ref{fig:qubits}-(a),(b)). 
For $n = 3$ and $m = 3$, the ratio is $c = \frac{n_{ph}}{n_l} \approx 2.65$. It increases to $c \approx 14.5$ for $n = 5$ and $m = 5$. 
This is not surprising, since longer qubit chains increase the probability of chain breaks and, hence, decrease the overall probability to measure optimal solution. 
Fig.~\ref{fig:qubits}-(c) shows the number of optimal measurements out of  $200$, for varying $n$ and $m$. 
We see that our \textit{QuantumSync} can solve the cases $n = 3, m = 8$, with probability to measure optimal solution in a single annealing $\rho = 4.39\%$; $n = 4, m = 4$, with $\rho = 12.5\%$, and $n = 5, m = 4$, with $\rho < 1\%$. 
At the same time, for problems with $n = 3$ and $m < 7$, $\rho > 62\%$. 
Note that certain problems are unmappable to D-Wave due to their size and thus we cannot report the statistics for those. 

\subsection{Ablation Studies on a Classical Computer}\label{ssec:ablation_studies_classical_computer} 
We now study the global minima of our problem on a classical computer and we design seven random problem instances that are globally solvable on standard hardware. Hence, we choose $n=3$, $m=|\Vertex|=3$ and $C=1$ (fully connected graph). For such a small size, we could exhaustively search for the global optimum both over binary variables and permutation matrices. Note that while the latter is the reasonable (actual) search space, an AQC can only optimize over the former. Hence, we are interested in quantifying the gap between the two and verify that our formulation indeed allows a QUBO-solver to achieve the global optimum for the problem at hand. 
\begin{figure}
    \centering
    \includegraphics[width=\linewidth, height=155pt]{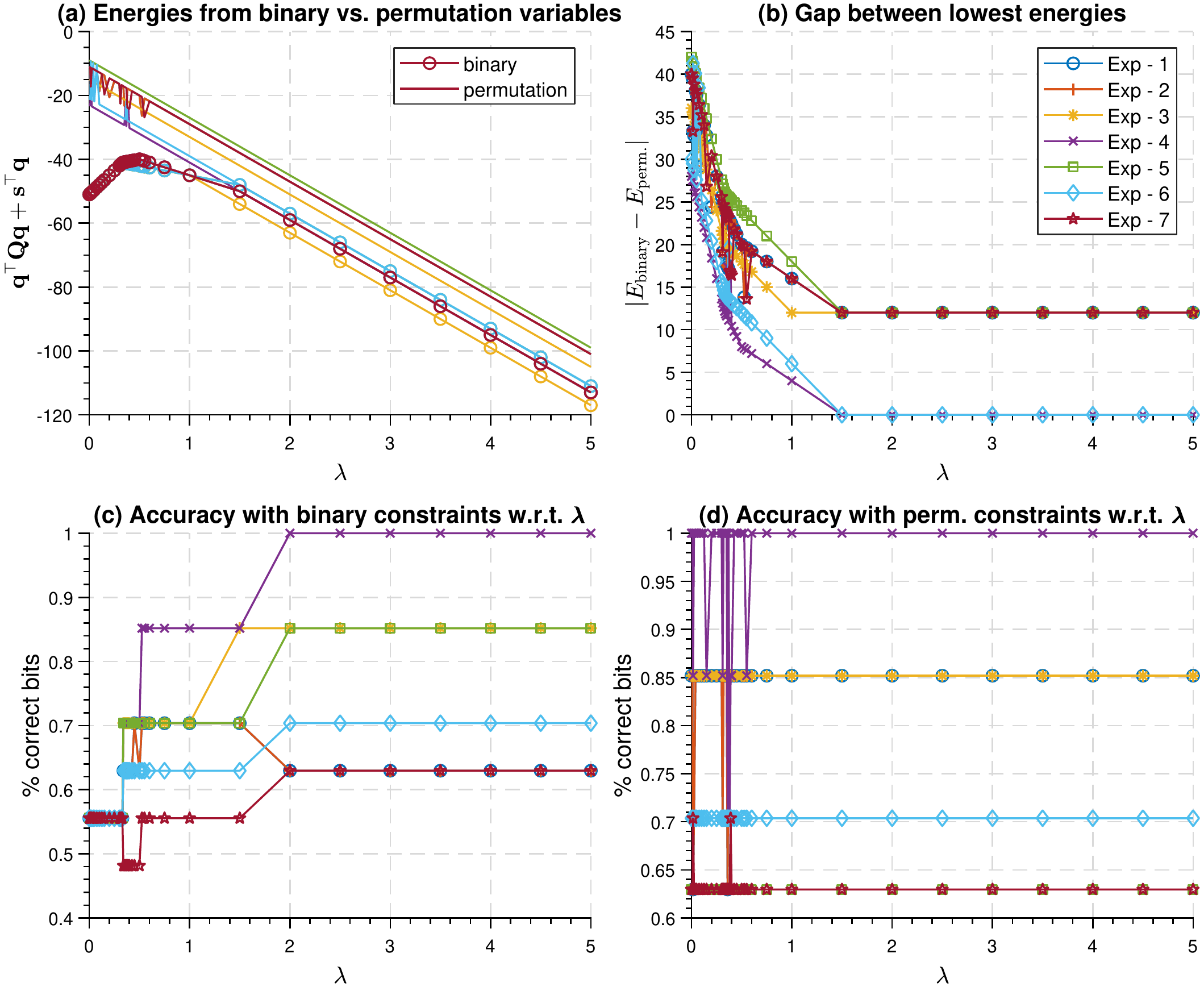}
    \caption{Studying the problem by exhaustive solutions on a classical computer. We solve seven random synthetic problems with $n=3$, $m=|\Vertex|=3$ and $\sigma=0.2$ by searching over \textbf{binary variables} (binary) or \textbf{permutation matrices} (perm.) exhaustively for the global optimum. We show: \textbf{(a)} the energies attained for both cases, \textbf{(b)} the gap (absolute difference) between the lowest energies in (a), \textbf{(c)} the ratio of correctly guessed bits as a function of the regularizer ($\lambda$) for the binary constraints, \textbf{(d)} same plot in (c) but for permutation matrices. Note, due to significant noise, the global optimum is not always the same as ground  truth.\vspace{-4mm}} 
    \label{fig:classical} 
\end{figure} 

\vspace{7pt}
\noindent\textbf{How can we choose the regularization coefficient {\large $\lambda$}?} 
The coefficient $\lambda$ is one of the most important hyper-parameters of our algorithm as it balances the data term \textit{vs} permutation penalty. Hence, we investigate its behavior. 
~\cref{fig:classical} shows that over those seven experiments where $\sigma=0.2$, while the lowest energies between the two solutions can differ (\textbf{a} and \textbf{b}), for a wide variety of $\lambda$-choices the accuracy attained by binary optimization and permutation optimization can be very comparable (\textbf{c} \textit{vs}  \textbf{d}). As long as $\lambda$ is not small (\textit{e.g.,} $<2$), we observe almost identical performance. This positive result has motivated us to settle for a single value $\lambda=2.5$ for all of our evaluations (including Sec.~\ref{ssec:evaluations_Dwave}). 

\vspace{7pt}
\noindent\textbf{Are permutation constraints effective?} 
As D-Wave cannot search over the permutations but only over binary variables, it is of interest to see whether our permutation-ness regularization really works. To investigate that, we design random experiments ($n=3, m=3$) with  increasing noise (swap ratio) where the GT is known. We then form the constrained $\Q$ matrix and solve it via exhaustive search on a classical computer. Averaged over seven experiments,~\cref{fig:energySR} shows that: (i) for low noise regime, optimizing over general binary variables $\bfB$ or over permutations $\Pset$ are indifferent and global optimum can always be found, (ii) for higher noise levels, while the energies attained seem to differ, the final accuracy is very similar. Hence, we conclude that injecting permutation constraints into $\Q$ as proposed is useful and makes it possible to use binary variables instead of permutations. This justifies why an adiabatic computer such as D-Wave could obtain global optima.
\begin{figure}
    \centering
    \includegraphics[width=\linewidth]{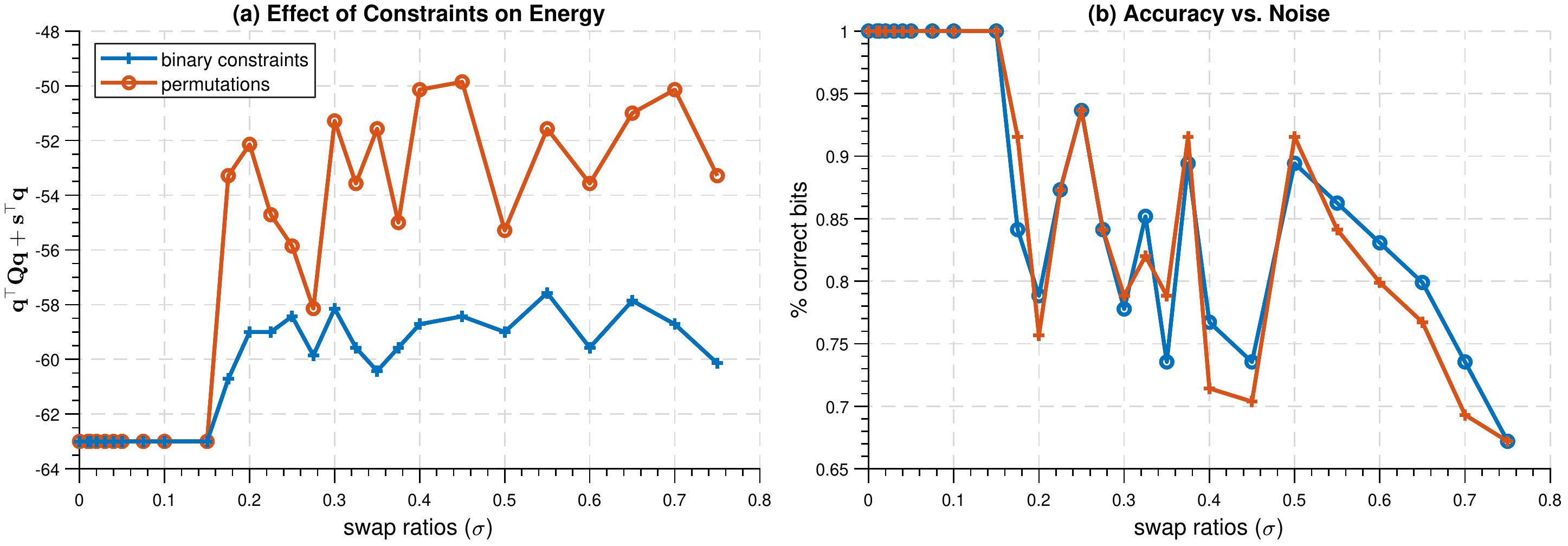}
    \caption{Effect of noise on the accuracy of the solution obtained by optimizing either over binary variables or permutation matrices. \textbf{(a)} Energy levels when the solution is restricted to binary variables or permutation matrices. \textbf{(b)} Accuracy attained by these two restricted solutions.\vspace{-3mm}}
    \label{fig:energySR} 
\end{figure}

\vspace{-3mm}\section{Conclusion} \vspace{-1mm}
We presented \emph{QuantumSync}, the first quantum approach to synchronization. We specifically focused on the group of permutations and showed how to formulate such problems for an adiabatic computer. We then used the cutting-edge quantum hardware to solve real-world problems with global guarantees. 
Our forward-looking experiments demonstrate that quantum computing hardware has reached the level that it can be applied to real-world problems with high potential to improve upon the known classical methods. 
We believe that our technique can inspire new generations of better algorithms for related and other computer vision problems, and we expect to see more work in the field in near future. 

{
\footnotesize
\noindent\textbf{Acknowledgements.} 
The work was supported by the ERC Grant 4DReply (770784), a Vannevar Bush Faculty fellowship, a grant from the Stanford-Ford Alliance and gifts from Amazon AWS and Snap, Inc.
}

{\small
\bibliographystyle{ieee_fullname}
\bibliography{biblio}
}

\onecolumn
\setcounter{section}{0}
\renewcommand\thesection{\Alph{section}}
\newcommand{\suppsection}{\subsection}
\clearpage
\begin{center}
\textbf{\Large Quantum Permutation Synchronization\\ ---Supplementary Material---}
\end{center}
\makeatletter

This part supplements our main paper by providing (i) proofs of the theorems and propositions contained in the main paper; (ii) further insights into the constraints that we propose; (iii) visualizations of minor embeddings on D-Wave; (iv) details and illustrations of the synthetic data; and (v) further descriptions of our real dataset.
The figures and tables introduced in this document are referenced using Roman numerals, 
whereas the original references to the main matter are preserved.

\section{Proof of Theorems}
\begin{proof}[Proof of Theorem \ref{theorem:cycle_consistency}.]
The proof follows from plugging Eq.~\eqref{eq:cc} into the definition of a null-cycle in Eq.~\eqref{eq:nullcycle} of the main paper, obtaining:
\begin{align}
f_c &= f_{1,2}\circ f_{2,3}\circ\cdots\circ f_{(n-1),n} \circ f_{n,1} \\
&= \big(f_1\circ f_2^{-1}\big)\circ\big(f_2\circ f_3^{-1}\big)\circ \cdots \circ \big(f_n\circ f_1^{-1}\big)\\
&= f_1 \circ \big(f_2^{-1}\circ f_2\big)\circ \big(f_3^{-1}\circ f_3\big) \circ \cdots \circ \big(f_n^{-1}\circ f_n\big)\circ f_1^{-1} \\ 
&= f_1 \circ f_1^{-1} = f_{null} \quad \forall c\in\Cycles. %\rlap{$\qquad\qquad\qquad\qquad\quad\,\,\,\Box$}\nonumber
\end{align}
\end{proof}
%\parahead{Proof of Theorem 2.}
\begin{proof}[Proof of Theorem \ref{theorem:gauge_freedom}.]
Without loss of generality, for an edge $(i,j)$, we consider the transformed cycle consistency constraint:
\begin{equation}
(f_i\circ f_g) \circ (f_j\circ f_g)^{-1} = (f_i\circ f_g) \circ (f_g^{-1} \circ f_j) = f_i\circ f_j^{-1}.
\end{equation}
This shows that consistency relation is unchanged under the action of an arbitrary element in the group $f_g$. 
\end{proof}

\section{Extended Proof of Proposition \ref{prop1}}
Here we provide more steps to the derivation in Prop.~\ref{prop1} of the main paper:
\begin{align}
\X^\star &= \argmin\limits_{\{\Pm_i \in \Pset_n^N\}} \sum_{(i,j)\in \Edge} \|\Pm_{ij} - \Pm_i \Pm_j^\top \|^2_\mathrm{F} \\
&= \argmin\limits_{\{\Pm_i \in \Pset_n^N\}} \sum_{(i,j)\in \Edge} \|\Pm_{ij}\|^2_\mathrm{F} - \|\Pm_i \Pm_j^\top \|^2_\mathrm{F} - 2\mathrm{tr}(\Pm_j\Pm_i^\top\Pm_{ij})\nonumber\\
&= \argmin\limits_{\{\Pm_i \in \Pset_n^N\}} \,2N^2 n-2\sum_{(i,j)\in \Edge} \mathrm{tr}(\Pm_j\Pm_i^\top\Pm_{ij})\\
&=\argmin\limits_{\{\Pm_i \in \Pset_n^N\}} \,-\sum_{(i,j)\in \Edge} \mathrm{tr}(\Pm_j\Pm_i^\top\Pm_{ij})\\
&=\argmin\limits_{\{\Pm_i \in \Pset_n^N\}} \,-\sum_{(i,j)\in \Edge} \vecm(\Pm_i)^\top (\Id \otimes \Pm_{ij} )\vecm(\Pm_j) \nonumber \\
&=\argmin\limits_{\{\Pm_i \in \Pset_n^N\}} \,-\sum_{(i,j)\in \Edge} \q_i^\top (\Id \otimes \Pm_{ij}) \q_j
=\argmin\limits_{\{\Pm_i \in \Pset_n^N\}} \, \q^\top \Q^\prime \q.%\label{eq:quprog}
\end{align}

\section{Proof of Proposition \ref{prop2}}
Linear constraints can be injected into a QUBO problem in the following manner:
\begin{align}
\x^\star &= \argmin\limits_{\x \in \bfB} \, \x^\top \Q^\prime \x + \lambda \|\Ag\x-\bg\|_2^2\\
&= \argmin\limits_{\x \in \bfB} \, \x^\top \Q^\prime \x + \lambda \big(\Ag\x-\bg\big)^\top \big(\Ag\x-\bg\big)\\ 
&= \argmin\limits_{\x \in \bfB} \, \x^\top \big(\Q^\prime +\lambda\Ag^\top\Ag \big) \x - 2\lambda\bg^\top \Ag\x \\
&= \argmin\limits_{\x \in \bfB} \, \x^\top \Q \x + \s^\top\x.
\end{align}

\section{Example Constraint Matrices and Sparsity Patterns}
We have shown in the main paper that the permutation constraints can be formulated as a set of linear systems $\{\Ag_{i}\q_i=\bg\triangleq \one\}$. We now show concrete examples in two and three dimensions of these constraint matrices:
\begin{enumerate}[parsep=0pt,partopsep=0pt]
    \item $n=2$:
    \begin{equation}
    \Ag_i= \begin{bmatrix}
1 &1&0&0 \\
0&0&1&1 \\
1&0&1&0 \\
0&1&0&1
    \end{bmatrix}, \quad \bg_i = \one =  \begin{bmatrix}
    1\\
    1\\
    1\\
    1
    \end{bmatrix}.
\end{equation}
\item $n=3$:
\begin{equation}
    \Ag_i= \begin{bmatrix}
1 &1&1&0&0&0&0&0&0 \\
0&0&0&1&1&1&0&0&0 \\
0&0&0&0&0&0 &1&1&1 \\
1&0&0&1&0&0&1&0&0 \\
0&1&0&0&1&0&0&1&0 \\
0&0&1&0&0&1&0&0&1 \\
    \end{bmatrix}, \bg_i=  \begin{bmatrix}
    1\\
    1\\
    1\\
    1\\
    1\\
    1
    \end{bmatrix}.
\end{equation}
\end{enumerate}

\insertimageStar{1}{Qmatrices.pdf}{Sparsity patterns of the unconstrained (\textbf{top row}) and constrained (\textbf{bottom row}) $\Q$-matrices. For the former we use $\lambda=0$ while the latter uses $\lambda=2$. The matrix is constructed for $n=m=4$ and thus has $mn^2=64$ elements per dimension. We visualize the matrices for a range of completeness $C$ and swap-ratio $\sigma$ values.\vspace{-3mm}}{fig:Q}{t!}

We further analyze how our quadratic constraint matrix looks like when these linear constraints are added. For different random experiments, we plot  in~\cref{fig:Q} these $\Q$-matrices for different values of completeness $C$ and swap ratio $\sigma$ when $n=m=4$. Note that this is a typical setting for our synthetic evaluations.

\section{Minor Embeddings and  Sampling}\label{app:minor_embeddings} 
In the default mode, \textit{Leap2} allocates couplings  between qubits in the minor embeddings even if they are equal to zero. 
If zero couplings are explicitly avoided, the total number of physical qubits required to minor-embed a logical problem, along with the maximum required chain length, decreases, and the probability of measuring the optimal solution in single annealing is increasing. 
See~\cref{fig:qubits_new} for the variant of~\cref{fig:qubits} of  the main paper, without allocated zero qubit couplings. 
We observe empirically that more compact minor embeddings ---in contrast to what one could presume--- do not allow to solve larger problem instances.
In~\cref{fig:minor_embeddings}, we show three exemplary minor embeddings on the Pegasus architecture of Advantage system $1.1$ for problems of different sizes. 
While the number of logical qubits for the shown embeddings ranges from $18$ (A) to $63$ (C), the number of the corresponding physical qubits ranges from $49$ (A) to $550$ (C). 
Note that the shown visualizations include zero couplings,  and more compact minor embeddings are possible  (\textit{cf.} Fig.~\ref{fig:qubits_new}). 
With the increasing problem size, we also see that maximum lengths of qubit chains $l$ in the embeddings, which encode the same logical qubit, increase as well. 
While $l = 4$ for $n = 3, m = 4$, it increases to $10$ and $13$ for the combinations  $n = 4, m = 4$ and $n = 3, m = 8$, respectively. 
For the same problems, AQC of the previous generation 2000Q requires $108$, $782$ or $1378$ physical qubits on average over $50$ runs, respectively.

\begin{figure*}[t]
    \centering 
    \includegraphics[width=\linewidth]{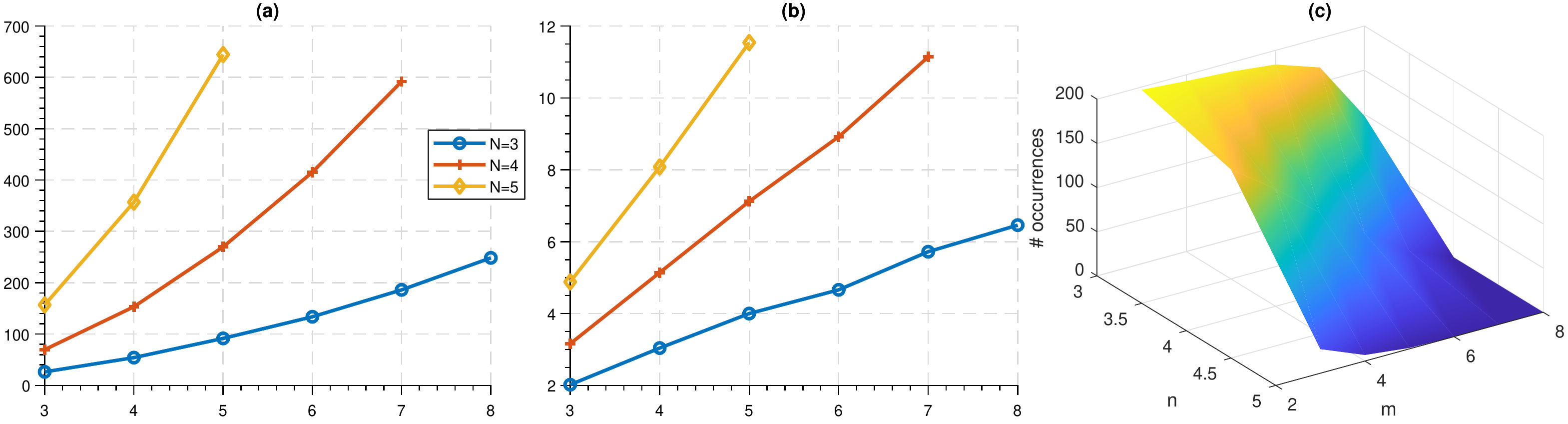} 
    \caption{This figure supplements Fig.~\ref{fig:qubits} of the main paper with the difference that the zero qubit couplings are excluded. For different $n$ and an increasing number of views $m$,  (\textbf{a}) plots the number of qubits required to map a problem; and (\textbf{b}) at $\chi = 3.0$, shows the required maximum chain length required to embed the problem on Advantage $1.1$. (\textbf{c}) plots the average number of measured optimal solutions in $200$ samples, for different pairs of $n$ and $m$ (averaged over $50$ repetitions).
    } 
    \label{fig:qubits_new} 
\end{figure*}

\parahead{Selecting qubit biases} 
We empirically find that the qubit biases have to be set differently compared to as derived in \eqref{eq:unconstrained_QUBO}, see~\cref{tab:experiment_App_E} for the summary of our experiment. 
Thus, $-\sqrt{|\s_k|}$ worked well (instead of the derived $-\operatorname{diag}({\bfQ})_k -\s_k$)\footnote{Our experiments suggest that there is a broader range of smaller biases which work as well as $-\sqrt{|\s_k|}$. 
In several our test cases, $-0.5 \s_k$ were leading to similar solution distributions. 
Further study is required on the differences between the  derived weights and the weights which should be set in practice.} in combination with the selected chain strengths and $\lambda$, see~\cref{ssec:evaluations_Dwave} for details on parameter selection. 
The possible reasons for that lie on the hardware side. 
First, the qubit biases and couplings are converted to magnetic fields acting on qubits, \textit{i.e.,} the weight encoding and annealing are analogue physical processes. 
The embedded problem is predominantly defined in terms of qubit couplings ($\frac{k^2 - k}{2}$ logical couplings at most in contrast to $k$ logical biases at most for a problem with $k$ logical qubits). 
Moreover, magnetic fields are imposed to keep chains of physical qubits intact. 
Last but not least, the range of real (floating-point) values which can be mapped to the native Ising format of D-Wave is limited (currently, it is $[-2; 2]$ for biases and $[-1; 1]$ for couplings on 2000Q and system $1.1$), and analytically derived biases and couplings have to be scaled down to the supported ranges. 
These factors can lead to offsets in qubit biases which are difficult to predict theoretically, as QUBO formulations are often derived without consideration of the minor embedding to a real AQC. 

\begin{figure}[!t]
    \centering 
    \includegraphics[width=\linewidth]{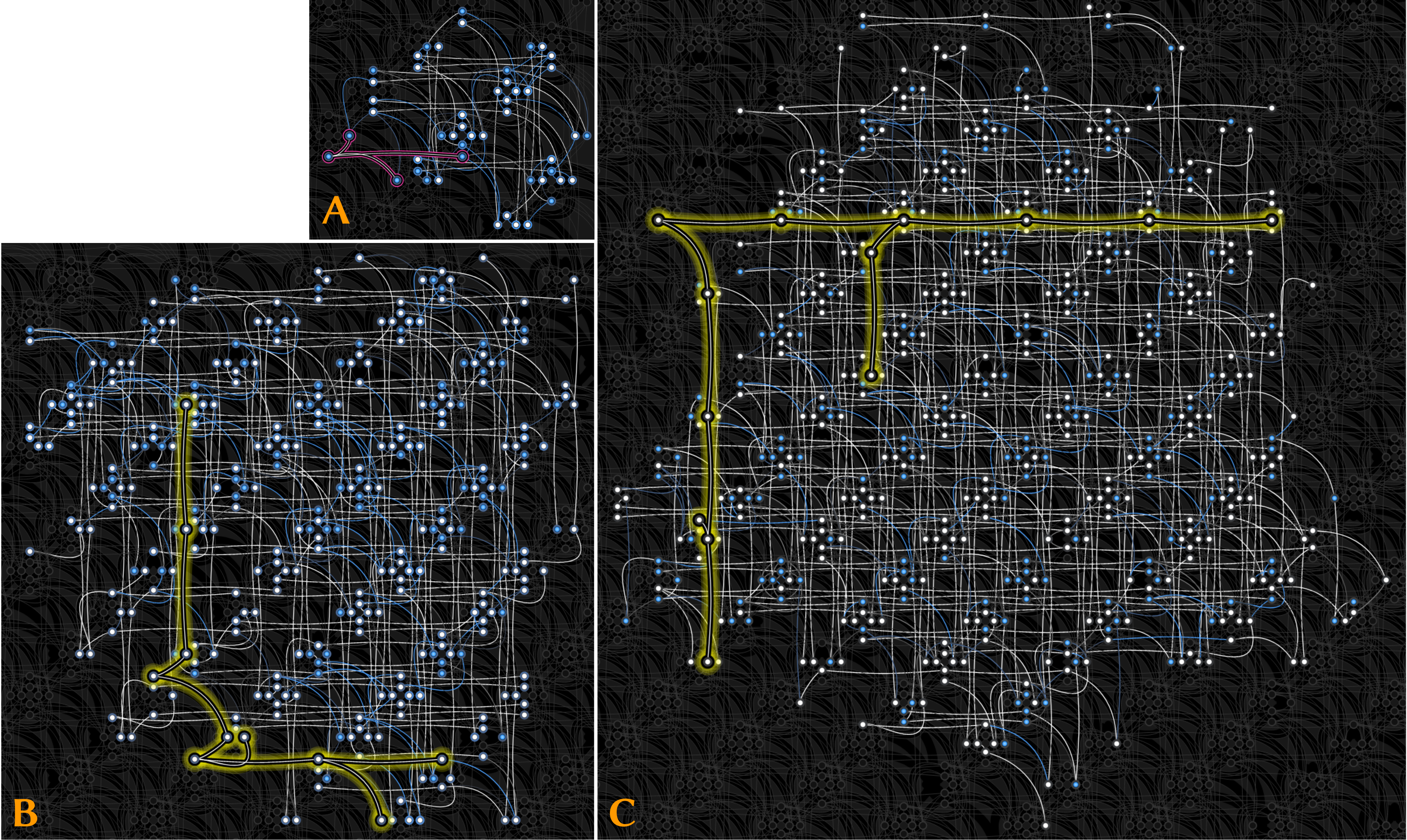} 
    \caption{ 
    Exemplary minor embeddings in the experiments with $n = 3, m = 3$ (A,  $n_{ph} = 49$), $n = 4, m = 4$ (B, $n_{ph} = 341$) and $n = 3, m = 8$ (C, $n_{ph} = 550$). 
    In each case, we highlight qubit chains of the maximum chain length in the embedding, either in magenta (A, no warnings) or yellow (B and C, chain length warnings). 
    Note that the shown minor embeddings  include zero qubit couplings. 
    } 
    \label{fig:minor_embeddings} 
\end{figure}

\begin{table}% [h] 
  \centering 
  \caption{ 
  The table summarises the average number (and the corresponding standard deviation) of measured optimal solutions out of $200$, for $n=4, m=4$ and varying chain strength $\chi$, strictly as derived in~\cref{prop2} (first row) and with adjusted weights as described in~\cref{app:minor_embeddings}, \textit{i.e.,} $-\sqrt{|\s_k|}$ (second row). 
  Each number is reported for $50$ repetitions of the experiment. 
  } 
  \setlength{\tabcolsep}{3pt} 
  \resizebox{0.59\columnwidth}{!} 
  { 
    \begin{tabular}{cccccc} 
          & $\boldsymbol{\xi = 1}$ &  $\boldsymbol{\xi = 2}$ & $\boldsymbol{\xi = 3}$ &  $\boldsymbol{\xi = 4}$ & $\boldsymbol{\xi = 5}$ \\ 
    \cmidrule{1-6} 
    derived    & $0.02$ $\pm$ $0.14$ & $1.26$ $\pm$ $2.1$ & $1.38$ $\pm$ $1.75$ & $0.08$ $\pm$ $0.27$ & $0.0$ $\pm$ $0.0$\\ 
    adjusted   & $0.1$ $\pm$ $0.30$ & $9.9$ $\pm$ $10.08$ & $17.3$ $\pm$ $14.0$ & $6.1$ $\pm$ $7.77$ & $1.42$ $\pm$ $2.72$ 
    \end{tabular} 
    }\vspace{-2mm} 
  \label{tab:experiment_App_E}% 
\end{table}% 

\parahead{On sampling}
Note that, theoretically the samples coming from the quantum annealer can not be directly interpreted as the samples from the induced Bolzmann distribution characterized by $\beta$ (\textit{e.g.,} lower values of $\beta$ result in samples less constrained to the lowest energy states), as the annealer samples a modified posterior. Optionally, one could steer the samples towards the local minima using a CPU-based descent algorithm. However, we found that in practice this hack, also suggested by D-Wave, does not work well for capturing a diverse set of modes. Nevertheless, as we show in~\cref{fig:uncertainty} (in the main paper) using the original samples as alternative plausible solutions could still boost the accuracy albeit incrementally.
While a true posterior adjustment is not yet implemented in D-Wave, we foresee that upcoming years would witness a leap on these fronts. 

\section{Further Details on Synthetic Experiments}
We visualize in~\cref{fig:synth_data} some examples from our random synthetic dataset. While the common case of $n=4$ and $m=4$ can be visualized this way, the same does not hold true for other $n,m$ combinations. That is why the points on the grid only assist the visualization, and are not used in practice. The important cues are the correspondences denoting permutations, whose rows might be randomly swapped to inject noise.

\insertimageStar{1}{synthetic_data.pdf}{Samples from our fully connected synthetic dataset for different values of swap ratio $\sigma$. For sparser graphs, where completeness is less than one, it is possible to imagine edges being dropped from these graphs. In the figure, each group (indicated using differently colored points) corresponds to a \emph{view} and each inter-group correspondence corresponds to a permutation that we optimize for. Note that the points are drawn as a grid to ease visual perception. Neither our algorithm nor the state-of-the-art methods we compare would use this information.\vspace{-3mm}}{fig:synth_data}{h!}

\section{Evaluations on the Real Dataset Based on the Willow Object Classes~\cite{cho2013learning}}

\parahead{Dataset description} 
Willow dataset is composed of 40 RGB images of four object classes: \emph{duck}, \emph{car}, \emph{winebottle}, \emph{motorbike} and \emph{face}. These images are extracted either from \emph{Caltech-256} or \emph{PASCAL VOC} datasets. 
The \emph{in-the-wild} nature of these images as shown in Figs.~\ref{fig:willow} and~\ref{fig:willow_matches} makes simple template matching ill-suited for correspondence estimation. This is why, for our evaluations---as explained in the main paper---we followed the same design as Wang~\textit{et al.}~\cite{wang2018multi} and benefited from the state-of-the-art-deep learning approaches. 
The authors of~\cite{cho2013learning} have manually selected ten distinctive points on the target object (category). These keypoints are annotated consistently across all the instances in each of the categories as we show in~\cref{fig:willow}. To this end, a semi-automatic graph matching has been used as explained in~\cite{cho2013learning}.
In our paper, we have further extracted 35 subsets of this dataset composed of only the first four annotations ($n=4$, as shown in~\cref{fig:willow}) and four images which were consecutive ($m=4$). Thus, we have $35$ subset graphs. Due to our automated initial graph matching process, while the keypoint locations are always correct, the permutation matrices $\Pm_{ij}$ are contaminated with real noise (\cref{fig:willow_matches}). %That we refine with our approach.

\insertimageStar{1}{willow4.jpg}{A random example from the \emph{cars} class of Willow Object Classes dataset~\cite{cho2013learning}.\vspace{-4mm}}{fig:willow}{h!}

\parahead{Extended plots of the evaluations} 
In the paper we have reported the average accuracy over all the images in our modified-willow dataset. However, it is also of interest to see how our algorithm performs on individual subsets. We plot in~\cref{fig:willow_sota} the performance of all the approaches under test, over all the individual subgraphs. It is visible that all the methods perform similarly. While our algorithm has an overall advantage, it can fail on certain examples, despite the quest for the global minimum. This explains the slightly lower accuracy with respect to the exhaustive solution.

\insertimageStar{0.97}{willow_sota.pdf}{Detailed evaluations over all the subsets on the Willow Object Classes dataset~\cite{cho2013learning}.\vspace{-3mm}}{fig:willow_sota}{h!}

\insertimageStar{1}{willow_dataset.pdf}{Initial matches on random examples from Willow Object Classes dataset~\cite{cho2013learning}. }{fig:willow_matches}{t!} 

\parahead{Evaluation metrics} 
In the main paper, we mention that we report the number of bits correctly detected. This so-called \emph{accuracy} measure between the estimated permutations $\X$ and their ground truth $\bar{\X}$ is formally defined as:
\begin{align}
    \epsilon (\X, \bar{\X}) = 1 - \frac{1}{mn^2} \sum\limits_{i=1}^m |\vecm(\X_i \oplus \bar{\X}_i)|_1, 
\end{align}
where $\oplus$ is the \emph{exclusive-or} operand. Intuitively, this is a Hamming similarity derived from the Hamming distance in which an error is made whenever two bits differ.

\section{Detailed Analysis of Several Recent Works on Quantum Computer Vision} 

\begin{table}[h!] 
    \centering 
    \begin{tabular}{|p{3.5cm}|p{6.9cm}|p{1.35cm}|p{1.2cm}|p{2.25cm}|} 
                          $\quad\quad\;\;$\textbf{Algorithm}$\;\;$ & $\quad\quad\quad\quad\quad\quad\quad\quad$\textbf{Problem} & \textbf{Processor} & \textbf{\# qubits} & \textbf{total~QPU time} \\ \hline 
                            QUBO Suppression \cite{LiGhosh2020} & 
                            non-maximum suppression in human tracking &  
                            2X & 
                            1000 & 
                            n/a \\
                            QA \cite{pointCorrespondence, pointCorrespondenceSupp} & 
                            transformation estimation and point set alignment & 2000Q & 
                            2048 & 
                            $60$ sec.$^*$ \\ 
                            QGM \cite{SeelbachBenkner2020} & 
                            graph matching (two graphs, up to four points) &  
                            2000Q & 
                            2048 & 
                            $2-2.5$ min. \\ 
                            \textit{QuantumSync} (ours) &   
                            permutation synchronization (multiple views, multiple points) & 
                            Adv.~$1.1$ & 
                            5436 & 
                            $> 15$ min. \\ 
    \end{tabular} 
    \caption{Overview of several recent quantum computer vision  methods published at computer vision conferences and our \textit{QuantumSync}. 
    Note that the right-most column reports the overall  experimental QPU runtime in the evaluation of the methods. 
    ``$^*$'': QA has been recently tested on D-Wave 2000Q; the results are reported in the supplementary document  \cite{pointCorrespondenceSupp}.} 
    \label{tab:comparison_of_works} 
\end{table} 

In this section, we analyze several recent works on quantum computer vision \cite{pointCorrespondence, LiGhosh2020, SeelbachBenkner2020} and \textit{QuantumSync}. 
We summarize the problems, the D-Wave processors and the experimental QPU time used in the experiments on real quantum hardware in~\cref{tab:comparison_of_works}. 
The QUBO suppression approach \cite{LiGhosh2020} solves an existing QUBO formulation of non-maximum suppression on D-Wave 2X with $1000$ qubits. 
The experimental results show improvements in solving the target combinatorial optimization problem on quantum hardware, and the authors conclude that the age of quantum computing for human tracking has arrived. 

Quantum Approach (QA) to correspondence problems on point sets, inspired by the altered gravitational model for point set alignment \cite{BHRGA2019}, was proposed in \cite{pointCorrespondence}, where the cases with and without known correspondences between the points have been considered. 
In QA, final rigid transformations are approximated as a linear  combination of  basis elements which, in the general case, allows  representing affine transforms. 
Thus, the method can be extended to affine transforms in a straightforward way. 
QA was confirmed on D-Wave 2000Q as reported in the supplementary material  accompanying the paper \cite{pointCorrespondenceSupp}. 
QA can align two point sets at a time. 
Even though the dimensionality of the resulting matrix of couplings and  biases  does not depend on the number of points (the size of this matrix depends on the cardinality of the basis elements  set), the super-linear complexity to prepare the state for large problems on classical hardware can result in large runtimes. 
The method of Benkner~\etal~\cite{SeelbachBenkner2020} for quantum graph matching (QGM) is concurrent to our work. 
While they propose a similar mechanism to impose permutation matrix constraints, there are multiple fundamental differences. 
First, QGM is designed for operation on two graphs, whereas we propose a permutation (map) synchronization algorithm for multiple views/scans with multiple points each. 
In our problem setting, initial (noisy) estimates of the pairwise permutations are given. 
Second, we successfully confirm our method for up to nine views with three points each, or seven views with four points each, whereas in~\cite{SeelbachBenkner2020}, 
only the $3 \times 3$ case has been successfully solved on a real AQC. 
Another distinguishing aspect of our work is that we achieve the probability of over $90\%$ to measure a correct optimal solution for the permutation synchronization problem for the $3 \times 3$ case, whereas~\cite{SeelbachBenkner2020} this probability (for a different problem though) only slightly exceeds the probability of randomly guessing a correct $3 \times 3$ permutation (\textit{i.e.,}  ${\approx}16.7\%$). 
While the experiments in~\cite{SeelbachBenkner2020} are performed on $2000Q$ as opposed to the Advantage system $1.1$ (the latest generation of D-Wave AQC) we use, the difference in the probabilities is unlikely to stem from the differences between the hardware architectures. 
We have tested our algorithm on $2000Q$ as well and even observed slightly higher probabilities for the $3 \times 3$ case, though larger problems either resulted in lower probabilities compared to Advantage system $1.1$ or could not be embedded on the $2000Q$ QPU and solved on it at all. 

\end{document}